\newtheorem{theorem}{Theorem}[section]
\newtheorem{lemma}[theorem]{Lemma}
\newtheorem{proposition}[theorem]{Proposition}
\newtheorem{fact}[theorem]{Fact}
\newtheorem{corollary}[theorem]{Corollary}
\newtheorem{claim}[theorem]{Claim}
\newtheorem{definition}[theorem]{Definition}
\newenvironment{sketch}{\noindent{\bf Proof [Sketch]:~~}}{\(\qed\)}
\renewcommand{\paragraph}[1]{\noindent {\bf #1}}
\newcommand{\N}{{\mathbb N}}
\newcommand{\calA}{{\cal A}}
\newcommand{\polylog}[1]{\mathrm{polylog}{#1}}
\newcommand{\poly}[1]{\mathrm{poly}(#1)}
\newcommand{\SLC}{SLC}
\newcommand{\mcA}{{\mathcal A}}
\newcommand{\true}{\mathbf{true}}
\newcommand{\false}{\mathbf{false}}
\newcommand{\ignore}[1]{}
\begin{document}

\title{Fast Local Computation Algorithms}
%\title{Local Computation Algorithms}
\author{
Ronitt Rubinfeld\thanks{CSAIL, MIT, Cambridge MA 02139 and
School of Computer Science, Tel Aviv University.
 E-mail: {\tt  ronitt@csail.mit.edu}. 
 Supported by NSF grants 0732334 and 0728645,
 Marie Curie Reintegration grant PIRG03-GA-2008-231077
 and the Israel Science Foundation grant nos. 1147/09 and  1675/09.}
\and
Gil Tamir\thanks{School of Computer Science, Tel Aviv University.
 E-mail: {\tt  giltamir@gmail.com}. Supported by
 Israel Science Foundation grant no. 1147/09.}
\and
Shai Vardi
\thanks{School of Computer Science, Tel Aviv University.
 E-mail: {\tt  shaivardi@gmail.com}. Supported by
 Israel Science Foundation grant no. 1147/09. }
\and 
Ning Xie
\thanks{CSAIL, MIT, Cambridge MA 02139. 
 E-mail: {\tt  ningxie@csail.mit.edu}. 
 Part of the work was done while visiting Tel Aviv University. 
 Research supported in part by NSF Award CCR-0728645.}
}
\date{}

\setcounter{page}{0}
\maketitle

\begin{abstract}

For input $x$, let $F(x)$ denote the set of outputs
that are the ``legal'' answers for a computational
problem $F$.   Suppose $x$ and members of $F(x)$
are so large that there is not time to read them in their entirety.
We propose a model of {\em local computation algorithms}
which for a given input $x$, support queries by a user 
to values of specified locations $y_i$ in a legal output $y \in F(x)$.
When more than one legal output $y$ exists for a given $x$,
the local computation algorithm should output in a way
that is consistent with at least one such $y$.
Local computation algorithms are intended to distill the
common features of several concepts that have appeared in 
various algorithmic subfields, including local distributed
computation, local algorithms, locally decodable codes, and
local reconstruction.

We develop a technique, based on known constructions of small sample spaces of $k$-wise independent
random variables and Beck's analysis in
his algorithmic approach to the Lov{\'{a}}sz Local Lemma,
which under certain conditions can be applied 
to construct local computation algorithms that run in {\em polylogarithmic} time and space. 
We apply this technique to maximal independent set computations,
scheduling radio network broadcasts, hypergraph coloring
and satisfying $k$-SAT formulas.
\end{abstract}

\newpage

\section{Introduction}\label{Sec:intro}
Classical models of algorithmic analysis assume that 
the algorithm reads an input,
performs a computation and writes the answer to an output tape.   
On massive data sets, 
such computations may not be feasible, as
both the input and output may be too 
large for a single processor to process.  
Approaches to such situations range from proposing alternative
models of computation, such as  
parallel and distributed computation, to requiring that 
the computations achieve only approximate answers or other weaker guarantees,
as in sublinear time and space algorithms.    

In this work, we consider the scenario in which only specific
parts of the output $y = (y_1,\ldots,y_m)$ are 
needed at any point in time.    For example, if $y$ is a description
of a maximal independent set (MIS) of a graph such that $y_i=1$ if
$i$ is in the MIS and $y_i=0$ otherwise, then it may be 
sufficient to determine $y_{i_1}, \ldots, y_{i_k}$ for a small
number of $k$ vertices.
To this end, we define
a {\em local computation algorithm}, which  supports queries
of the form ``$y_{i}=?$''
such that after each query by the user to a specified location $i$, the
local computation algorithm is able to
quickly outputs $y_i$. 
For problems that allow for more than one
possible output for an input $x$,  as is often the case in combinatorial
search problems,
the local computation algorithm  must answer in such a way that 
its current answer to the query is 
consistent with any past or future answers
(in particular, there must always be at least one allowable output that is
consistent with the answers of the local computation algorithm ).
For a given problem, the hope is that the complexity of
a  local computation  algorithm  is
proportional to the amount of the solution
that is requested by the user.   
Local computation algorithms  are especially adapted to computations of  good
solutions of combinatorial search problems (cf. \cite{PR07,MR06,NO08}).   

Local computation algorithms are a formalization that describes concepts that 
are ubiquitous in the literature, and arise in various
algorithmic subfields, including local distributed
computation, local algorithms, locally decodable codes and
local reconstruction models.   
We subsequently describe the relation between these models in more detail,   
but for now, it suffices it to say that the aforementioned models are diverse 
in the computations that they apply to
(for example, whether they apply to function computations,
search problems or approximation
problems), the access to data that is allowed (for example, 
distributed computation models and other local algorithms 
models often specify that the input is a graph in adjacency list format),
and the required running time bounds (whether the computation time is required
to be independent of the problem size, or whether it is allowed
to have computation time that depends on the problem
size but has sublinear complexity).
The model of local computation algorithms described here 
attempts to distill the essential common features of
the aforementioned concepts into a unified paradigm.

After formalizing our model,
we develop a technique that is specifically suitable 
for constructing polylogarithmic time local computation algorithms.
This technique is based on Beck's analysis in his
algorithmic approach to the Lov{\'{a}}sz Local Lemma (LLL) ~\cite{Bec91},
and uses the underlying locality of the problem to construct a solution.
All of our constructions must process the data somewhat so
that Beck's analysis applies, and we use two main methods
of performing this processing.

For maximal independent set computations and
scheduling radio network broadcasts,
we use a reduction  of Parnas and
Ron \cite{PR07},  which 
 shows that, given a distributed network in which
the underlying graph has degree bounded by a constant $D$,
and a distributed algorithm (which may be randomized
and only guaranteed to 
output an approximate answer) using at most $t$ rounds,
the output of any specified node of the distributed network
can be simulated by a single processor with query access to
the input with  $O(D^{t+1})$ 
queries.   
For our first phase, we apply this reduction to a constant number
of rounds of Luby's maximal independent set distributed algorithm \cite{Lub86}
in order to find a partial solution. 
After a large independent set is found,
in the second phase, we use the techniques of Beck \cite{Bec91} to show 

For hypergraph coloring and $k$-SAT,   
we show that Alon's parallel algorithm,  
which is guaranteed to find solutions by the Lov{\'{a}}sz Local Lemma~\cite{Alo91},
can  be modified to run in polylogarithmic sequential
time to answer queries regarding
the coloring of a given node, or the setting of
a given variable.
For most of the queries,
this is done by simulating the work of a processor and its neighbors
within a small constant radius.    
For the remainder of the queries,
we use Beck's analysis to show that the queries can be solved via the
brute force algorithm on very small subproblems.

Note that parallel $O(\log{n})$ time algorithms do not
directly yield local algorithms under the reduction
of Parnas and Ron \cite{PR07}. Thus,
we do not know how to apply our techniques to all problems
with fast parallel algorithms, including certain problems
in the work of Alon
whose solutions are guaranteed by the
Lov{\'{a}}sz Local Lemma and which have fast parallel algorithms
~\cite{Alo91}.
Recently Moser and Tardos~\cite{Mos09, MT10} gave, 
under some slight restrictions,  
parallel algorithms which find solutions of 
all problems  for which the existence of
a solution  is guaranteed by the 
Lov{\'{a}}sz Local Lemma~\cite{Bec91,Alo91}.
It has not yet been resolved whether one can
construct local computation algorithms based on these more powerful algorithms.

\subsection{Related work}
Our focus on local
computation algorithms is inspired by many existing works, which
explicitly or implicitly construct such procedures.
These results occur in a number of varied settings,
including distributed systems, coding theory,
and sublinear time algorithms.   

Local computation algorithms
are a generalization of {\em local algorithms},  which for
graph theoretic problems represented in adjacency
list format, produce a solution
by adaptively examining only a constant sized portion of the
input graph near a specified vertex.  
Such algorithms have received much attention in the
distributed computing literature   
under a somewhat different model, in which the number of rounds
is bounded to constant time, but the computation is performed
by all of the processors in the distributed network \cite{NS95, MNS95}.
Naor and Stockmeyer \cite{NS95} and Mayer, Naor
and Stockmeyer \cite{MNS95}
investigate the question of what can be computed  under these 
constraints,
and show that there are nontrivial problems with such algorithms.
Several more recent works 
investigate local algorithms for various problems, including
coloring, maximal independent set, dominating set 
(some examples are in \cite{KMW04,KMW06,KMN+05, KW06, KM07,K09, SW08, BE09, BE10, SW10}).
Although all of these algorithms are distributed algorithms, those
that use constant rounds 
yield (sequential) local computation algorithms via
the previously mentioned reduction of Parnas and 
Ron \cite{PR07}.\footnote{Note that the parallel
algorithms of Luby \cite{Lub86}, Alon 
\cite{Alo91} and Moser and Tardos \cite{MT10}
do not automatically yield local algorithms via this transformation
since their parallel running times are $t= \Omega(\log{n})$.}

There has been much recent interest among the sublinear
time algorithms community in devising local algorithms
for problems on constant degree graphs and other
sparse optimization problems.
The goals of these algorithms have been to approximate
quantities such as the optimal vertex cover, maximal matching,
maximum matching, dominating set, sparse set cover, 
sparse packing and cover problems
\cite{PR07, KMW06, MR06,NO08,HKN+09,YYI09}.
One feature of these algorithms is that they show how
to construct an oracle which for each vertex returns
whether it part of the solution whose size is being
approximated -- for example, whether it 
is in the vertex cover or maximal matching.   Their
results show that this oracle can be implemented in
time independent of the size of the graph (depending
only on the maximum degree and the approximation parameter).
However, because their
goal is only to compute an approximation of their quantity,
they can afford to err on a small fraction of their local computations.
Thus, their oracle implementations give local computation algorithms for 
finding relaxed solutions to the the optimization problems that
they are designed for. For example, constructing a local 
computation algorithm using the oracle designed 
for estimating the size of a maximal independent set 
in \cite{MR06} yields a large independent set, 
but not necessarily a maximal independent set.
Constructing a local computation algorithm using the oracle
designed for  estimating the size of
the vertex cover \cite{PR07,MR06,NO08}
yields a vertex cover whose size is 
guaranteed to be only slightly
larger than what is given by the $2$-approximate 
algorithm being emulated -- namely, by a multiplicative
factor of at most $2+\delta$ (for any $\delta>0$).

Recently, local algorithms have been demonstrated to
be applicable for computations on the web graph.
In \cite{JW03,Ber06,SBC+06,ACL06,ABC+08}, local algorithms are given which,
for a given vertex $v$ in the web graph, computes 
an approximation to $v$'s personalized PageRank vector
and computes the vertices that contribute significantly 
to $v$'s PageRank.  In these algorithms, evaluations
are made only to the nearby neighborhood of $v$, so
that the running
time depends on the accuracy parameters input to the
algorithm, but there
is no running time dependence on the size of the web-graph.
Local graph partitioning algorithms have been
presented in \cite{ST04,ACL06} which find subsets of vertices
whose internal connections are significantly richer
than their external connections.  The running time of
these algorithms depends on the size of the cluster
that is output, which can be much smaller than the size of
the entire graph.

Though most of the previous examples are for sparse
graphs or other problems which have some sort of sparsity, local computation
algorithms have also been provided for problems on dense graphs.
The property testing algorithms of
\cite{GGR98} use a small sample of the vertices
(a type of a core-set) to define a good
graph coloring or partition of a dense graph.  This approach
yields local computation algorithms for finding a large partition of the graph
and a coloring of the vertices which has relatively few
edge violations.

The applicability of local computation algorithms is
not restricted to combinatorial problems.   One
algebraic paradigm that has local computation algorithms
is that of {\em locally decodable
codes} \cite{KT00},
described by the following scenario:  Suppose $m$
is a string with encoding $y=E(m)$.  On input $x$, which
is close in Hamming distance to $y$, the goal of locally
decodable coding algorithms is to provide quick access
to the requested bits of $m$.  
More generally, the {\em reconstruction} models
described in \cite{ACC+08, CS06a,SS10} describe scenarios
where a string that has a certain property, such as monotonicity, is assumed
to be corrupted at a relatively small number of
locations.  Let $P$ be the set of strings that
have the property.   The reconstruction algorithm gets as
input a string $x$ which is close (in $L_1$ norm),
to some string $y$ in $P$.   For various types of
properties $P$, the above works construct
algorithms which give fast query access to locations in $y$.

\subsection{Followup work}
In a recently work, Alon et al.~\cite{ARVX11}
further show that the local computation algorithms in this work can
be modified to not only run in polylogarithmic time but also in polylogarithmic space.
Basically, they show that all the problems studied in this paper can be solved in 
a unified local algorithmic framework as follows: 
first the algorithm generates some random bits on a read-only random tape of 
size at most polylogarithmic in the input, 
which may be thought as the ``core'' of a solution $y \in F(x)$; 
then each bit of the solution $y$ can be computed \emph{deterministically}
by first querying a few bits on the random tape and 
then performing some fast computations on these random bits.  
The main technical tools used in~\cite{ARVX11} are pseudorandomness,
the random ordering idea of~\cite{NO08} and the theory of branching processes.

\subsection{Organization}
The rest of the paper is organized as follows.
In Section~\ref{Sec:model} we present our computation model.
Some preliminaries and notations that we use throughout the paper
appear in Section~\ref{Sec:prelim}.
We then give local computation
algorithms for the maximal independent set problem
and the radio network broadcast scheduling problem 
in Section~\ref{Sec:maxis} and Section~\ref{Sec:radio}, respectively.
In Section~\ref{Sec:hypergraph} we show how
to use the parallel algorithmic version of the Lov{\'{a}}sz
Local Lemma to give local computation algorithms
for finding the coloring of nodes in a  hypergraph.
Finally, in Section~\ref{Sec:kcnf},
we show how to find settings of variables according
to a satisfying assignment of a $k$-CNF formula.

\section{Local Computation Algorithms: the model}\label{Sec:model}

We present our model of {\em local 
computation algorithms} for sequential computations of search problems,
although computations of arbitrary functions and optimization
functions also fall within our framework.  

\subsection{Model Definition}
We write $n=|x|$ to denote the length of the input.
\begin{definition}
For input $x$, let $F(x) = \{y~|~ y {\rm ~is~ a~} {\rm valid~solution~ 
}{\rm for~
input~} x\}$.
The {\em search problem} is to find any $y \in F(x)$.   
\end{definition}

In this paper, the description 
of both $x$ and $y$ are assumed to be very large.

\begin{definition}[$(t, s, \delta)$-local algorithms]
Let $x$ and $F(x)$ be defined as above.
A {\em $(t(n), s(n), \delta(n))$-local computation algorithm} $\mcA$ 
is a (randomized) algorithm which 
implements query access to an arbitrary $y \in F(x)$ and satisfies the following:
$\mcA$ gets a sequence of queries
$i_1,\ldots,i_q$ for any $q>0$
and 
after each query $i_j$ it must produce an output
$y_{i_j}$ 
satisfying that 
the outputs $y_{i_1},\ldots, y_{i_q}$ are substrings of
some $y \in F(x)$.
The probability of success over all $q$ queries must be
at least $1-\delta(n)$.
$\mcA$ has access to a random tape and local computation memory on which
it can perform current computations as well as store and retrieve information
from previous computations.
We assume that the input $x$, the local
computation tape and any
random bits used are all presented in the RAM word 
model,
i.e., $\mcA$ is given the ability to access
a word of any of these in one step.
The running time of $\mcA$ on any query is at most $t(n)$,
which is sublinear in $n$,
and the size of the local computation memory of $\mcA$ is at most $s(n)$.
Unless stated otherwise, we always assume that the error parameter $\delta(n)$ is
at most some constant, say, $1/3$.
We say that  $\mcA$ is a {\em strongly local computation algorithm}
if both $t(n)$ and $s(n)$ are upper bounded by $\log^{c}n$ for some constant $c$.  
\end{definition}

\begin{definition}
Let $\SLC$ be the class of problems that have strongly local
computation algorithms.
\end{definition}

\noindent
Note that when $|F(x)|>1$, the $y$ according to which $\mcA$ outputs 
may depend on the previous queries
to $\mcA$ as well as any random bits available to $\mcA$.
Also, we implicitly assume that the size of the output
$y$ is upper-bounded by some polynomial in $|x|$.  
The definition of local-computation algorithms
rules out the possibility that the algorithms accomplish their computation by
first computing the entire output.
Analogous definitions can be made for a
bit model. 
In principle, the model
applies to general computations, including function computations,
search problems and optimization problems of any type of object,
and in particular, the input is not required by the model
to be in a specific input format.

The model presented here is intended be more general, and 
thus differs from other local computation models in
the following ways.   First, 
queries and processing time have the same cost.
Second, the focus is on problems with slightly looser running
time bound requirements -- polylogarithmic
dependence on the length of the input is desirable,
but sublinear time in the length of the input is often
nontrivial and can be acceptable.  
Third, the model places no restriction on 
the ability of the algorithm to access the input, as is
the case in the distributed setting where the algorithm
may only query nodes in its neighborhood (although 
such restrictions may be implied by the representation
of the input).   
As such, the model may be 
less appropriate for certain distributed algorithms
applications.

\begin{definition}[Query oblivious]
We say an LCA $\mcA$ is \emph{query order oblivious} (\emph{query oblivious} for short)
if the outputs of $\mcA$ do not depend on the order of the queries but
depend only on the input and the random bits generated by the random tape of $\mcA$.
\end{definition}

\begin{definition}[Parallelizable]
We say an LCA $\mcA$ is \emph{parallelizable} if $\mcA$ 
supports parallel queries.
\end{definition}

We remark that not all local algorithms in this paper are
query oblivious or easily parallelizable. 
However, this is remedied in~\cite{ARVX11}.

\subsection{Relationship with other distributed and parallel models}
A question that immediately arises is to characterize
the problems to which the local-computation algorithm model applies.
In this subsection, we note the relationship between
problems solvable with local computation algorithms 
and those solvable with fast parallel or distributed algorithms.

From the work of \cite{PR07} it follows that problems
computable by fast distributed algorithms also have
local computation algorithms.

\begin{fact}[\cite{PR07}] \label{fact:PR07}  
If $F$ is computable in $t(n)$ rounds on 
a distributed network in which the processor interconnection
graph has bounded degree $d$, then $F$ has a $d^{t(n)}$-local
computation algorithm.
\end{fact}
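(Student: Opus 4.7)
The plan is to give a direct simulation argument that exploits the locality inherent in the distributed LOCAL model. In $t(n)$ rounds of a distributed algorithm, each message travels at most one edge per round, so the final state of any processor $v$ is a (possibly randomized) function of the initial inputs, IDs, and private random bits of the processors lying within graph distance $t(n)$ from $v$ in the interconnection graph. Call this ball $B_{t(n)}(v)$. Since the maximum degree is bounded by $d$, a standard counting gives $|B_{t(n)}(v)| \le 1 + d + d^{2} + \cdots + d^{t(n)} = O(d^{t(n)})$.

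Given a query $i$ asking for $y_i$, the local computation algorithm $\mcA$ first performs a breadth-first exploration of the interconnection graph starting at processor $i$ up to depth $t(n)$, using its query access to the input to discover adjacencies. This uncovers $B_{t(n)}(i)$ together with all relevant input labels in time and queries $O(d^{t(n)})$. Then $\mcA$ simulates the distributed protocol inside this ball for $t(n)$ synchronous rounds, which requires only $\mathrm{poly}(d^{t(n)})$ additional local work, and returns the state that processor $i$ would hold at the end of round $t(n)$.

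The one subtlety is reproducing the randomness of the distributed algorithm consistently across queries, so that all returned values are substrings of a single $y \in F(x)$. The standard fix is to interpret the LCA's random tape as providing, for every processor ID $u$, an independent block of random bits $r_u$ (for instance by using the ID as an index, or by seeding a $k$-wise independent generator as in the later sections of the paper). When answering any query $j$, the simulation of a processor $u \in B_{t(n)}(j)$ uses exactly this fixed $r_u$. Because the BFS and the simulation are deterministic functions of the input and of $\{r_u\}$, any processor that lies in the intersection $B_{t(n)}(i) \cap B_{t(n)}(j)$ of two query-neighborhoods receives identical random bits in both simulations, so the outputs at $i$ and $j$ agree with the single global output produced by running the full distributed algorithm on $x$ with randomness $\{r_u\}$. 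The success probability of $\mcA$ is therefore at least that of the underlying distributed algorithm.

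The argument has no real obstacle beyond the bookkeeping just described; the main point to be careful about is that the random bits are addressed by processor identity, not by query order, which is exactly what makes the simulation query-oblivious and consistent. The resulting running time and query complexity are both $d^{t(n)}$ up to low-order factors, establishing the claim.
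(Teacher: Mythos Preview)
Your proposal is correct and follows essentially the same approach as the paper, which simply sketches the Parnas--Ron observation that the output of a $t(n)$-round distributed algorithm at a node $v$ depends only on the radius-$t(n)$ neighborhood $G_{t(n),v}$, and hence can be simulated by a sequential algorithm that queries that neighborhood. Your write-up is in fact more careful than the paper's, in that you explicitly handle the consistency of randomness across queries by indexing random bits by processor ID; the paper merely cites \cite{PR07} for the reduction and does not spell this out.
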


Parnas and Ron~\cite{PR07} show this fact
by observing that for any vertex $v$, 
if we run a distributed algorithm $\calA$ on the
subgraph $G_{k,v}$ (the vertices of distance at most $k$ from $v$), 
then it makes the same decision about vertex $v$ as it would if we would
 run $D$ for $k$ rounds on the whole graph $G$. 
They then give a reduction from randomized distributed algorithms to 
sublinear algorithms based on this observation.

Similar relationships hold in other distributed and parallel models,
in particular, for problems computable by low depth bounded fan-in
circuits.

\begin{fact} If $F$ is computable by a circuit family of depth $t(n)$
and fan-in bounded by  $d(n)$, then $F$ has a $d(n)^{t(n)}$-local
computation algorithm.
\end{fact}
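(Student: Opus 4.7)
The plan is to mimic the proof of Fact~\ref{fact:PR07}, replacing ``local neighborhood in the communication graph'' with ``backward cone of influence in the circuit.'' The key observation is that in a bounded fan-in circuit of depth $t(n)$, each output wire $y_i$ depends on only a small, explicitly describable set of input bits and intermediate gates.

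First, I would define, for each output index $i$, the \emph{cone} $C_i$ of $y_i$: the subcircuit consisting of all gates and wires that lie on some path from an input to the output gate computing $y_i$. Because the fan-in is bounded by $d(n)$ and the depth is $t(n)$, a straightforward induction on depth shows that $C_i$ contains at most $1 + d(n) + d(n)^2 + \cdots + d(n)^{t(n)} = O(d(n)^{t(n)})$ gates, and therefore depends on at most $d(n)^{t(n)}$ input bits.

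Next, I would describe the local computation algorithm $\mcA$. On query $i$, the algorithm does the following: starting from the output gate for $y_i$, perform a backwards breadth-first traversal of the circuit, enumerating the gates in $C_i$; query the corresponding input bits of $x$; then evaluate $C_i$ bottom-up and return the computed value. Each of these steps uses time and space $O(d(n)^{t(n)})$ in the RAM word model, assuming (as is implicit in the analogous Fact~\ref{fact:PR07}) that one can access the circuit's gate descriptions in unit time, e.g., via a uniform-circuit-family generator. Since $\mcA$ is deterministic and always returns the exact value of $y_i$ prescribed by the circuit, the outputs on any sequence of queries are automatically consistent with the single $y = F(x) \in F(x)$ computed by the circuit, so the error probability $\delta(n)$ is $0$.

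The only real subtlety, and the step I would flag as the main obstacle, is the model of access to the circuit itself: the bound $d(n)^{t(n)}$ is achieved only if we can navigate to the predecessors of a gate in time independent of the circuit's total size. For non-uniform families this must be part of the advice, and for uniform families it follows from efficient evaluation of the connection language; either way the algorithm's query and time complexity is $d(n)^{t(n)}$ as claimed, yielding the stated local computation algorithm.
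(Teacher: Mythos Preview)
The paper does not actually include a proof of this fact; it is stated without proof, with only the remark that ``similar relationships hold in other distributed and parallel models'' pointing back to the explanation of Fact~\ref{fact:PR07}. Your proposal supplies exactly the intended argument: the backward cone of influence from a single output gate in a depth-$t(n)$, fan-in-$d(n)$ circuit has size $O(d(n)^{t(n)})$, so evaluating that subcircuit gives the local computation algorithm. This is the natural circuit analogue of the Parnas--Ron neighborhood-simulation argument the paper cites, and your flagging of the circuit-access issue is appropriate and consistent with the paper's level of rigor here.
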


\begin{corollary}
$NC^0 \subseteq \SLC$.
\end{corollary}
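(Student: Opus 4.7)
\begin{proofof}{Corollary (sketch)}
The plan is to directly invoke the preceding Fact together with the definition of $NC^0$. Recall that a problem $F$ is in $NC^0$ if it is computable by a uniform circuit family in which the depth $t(n)$ and the fan-in $d(n)$ are both bounded by a constant, say $c_1$ and $c_2$ respectively, independent of $n$. Applying the Fact with these parameters yields a $d(n)^{t(n)} = c_2^{c_1} = O(1)$-time local computation algorithm for $F$.

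More concretely, given a query $i$, the local computation algorithm identifies (using the uniformity of the circuit family) the subcircuit $C_i$ rooted at the $i^{\text{th}}$ output gate, traces back at most $t(n)$ layers via at most $d(n)$ wires per gate, and thereby determines the constant-size set of input bits on which $y_i$ depends. It then queries exactly these bits of $x$ and evaluates $C_i$ on the retrieved values to output $y_i$. Because the traversal visits at most $d(n)^{t(n)} = O(1)$ gates in total, both the number of input queries and the arithmetic work performed are $O(1)$, and the working memory needed to store the gates and intermediate wire values during the evaluation is also $O(1)$.

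Hence $t(n) = s(n) = O(1) \le \log^{c} n$ for any constant $c \ge 0$, which places $F$ in $\SLC$. Since the answer $y_i$ is computed deterministically from $x$ alone, the success probability is $1$, so the failure parameter $\delta(n)$ is trivially below the required $1/3$ threshold, and the algorithm is moreover both query oblivious and parallelizable. The only ``obstacle,'' which is really a sanity check rather than a difficulty, is to verify that the Fact's bound is on the total work (queries plus internal computation) and that the same bound applies to space; both follow from the straightforward depth-first evaluation of the constant-size subcircuit $C_i$.
\end{proofof}
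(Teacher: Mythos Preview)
Your proposal is correct and follows essentially the same approach as the paper, which simply records the corollary as an immediate consequence of the preceding Fact (with $t(n)$ and $d(n)$ both constant for $NC^0$). You have, in fact, supplied more detail than the paper does---in particular, the explicit verification of the space bound and the error parameter---so there is nothing to criticize here.
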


In this paper we show solutions to several 
problems $NC^1$ via local computation algorithms.   
However, this is not possible in general as:\\

\begin{proposition}
$NC^1 \nsubseteq \SLC$. 
\end{proposition}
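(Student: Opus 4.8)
The plan is to exhibit an explicit problem in $NC^1$ that provably cannot have a strongly local computation algorithm, i.e.\ one running in $\mathrm{polylog}(n)$ time. The natural candidate is the $\textsc{Parity}$ function (or any function with a similar global-dependence flavor): given $x \in \{0,1\}^n$, the unique legal output is the single bit $y_1 = x_1 \oplus x_2 \oplus \cdots \oplus x_n$. Parity is computable by a bounded-fan-in formula of depth $O(\log n)$, hence lies in $NC^1$. First I would fix this problem as the witness and note that here $|F(x)| = 1$, so an LCA must, on query to the (only) output bit, return exactly the parity of $x$ with probability at least $1 - \delta(n) \ge 2/3$.

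Next I would argue the lower bound. A $(t(n), s(n), \delta(n))$-local computation algorithm answering the single query reads at most $t(n)$ words of the input $x$; in the bit model this is at most $t(n)$ bits (in the word model, at most $t(n) \cdot w$ bits where $w = O(\log n)$ is the word size, which is still $\mathrm{polylog}(n)$ if $t$ is). The key step is the standard observation that any bit of the input it does \emph{not} read is irrelevant to its output distribution: if $\mcA$ queries a set $S \subsetneq [n]$ of input positions (where $S$ may depend on $\mcA$'s random coins), then conditioned on the coins, flipping any $x_i$ with $i \notin S$ does not change $\mcA$'s answer, yet it flips the true parity. Formally, pick $x$ uniformly at random; with probability $1$ there is some index outside the (at most $t(n) < n$) positions read, and a coupling/averaging argument over $x$ and the random coins shows $\Pr[\mcA(x) = \mathrm{parity}(x)] = 1/2$, contradicting the $\ge 2/3$ success requirement. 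Hence $\textsc{Parity} \notin \SLC$, and therefore $NC^1 \nsubseteq \SLC$.

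I would present this as: (1) recall Parity $\in NC^1$ via the classic balanced-formula construction; (2) a short lemma stating that any algorithm reading $o(n)$ of the input bits cannot compute Parity with probability bounded away from $1/2$, proved by the flip-an-unread-bit argument together with an averaging over inputs; (3) observe that a strongly local computation algorithm for this one-output-bit problem reads only $\mathrm{polylog}(n) = o(n)$ input bits per query, apply the lemma, and conclude. One should be a little careful that the RAM word model lets $\mcA$ read a whole $O(\log n)$-bit word at a time, and that $\mcA$ is randomized and adaptive — but since $t(n)\cdot O(\log n)$ is still $o(n)$, and the argument conditions on the random tape before invoking the flip argument, neither adaptivity nor randomness helps.

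The main obstacle, though a mild one, is purely a modeling subtlety rather than a mathematical difficulty: one must make sure that ``cannot read all of $x$'' genuinely applies, i.e.\ that the input representation does not somehow pack the answer into a single word or a $\mathrm{polylog}$-size precomputable digest, and that the claimed $1/2$ success probability is derived rigorously by integrating over the input distribution (so that the ``unread bit exists'' event has probability $1$). Once the witness function and input format are pinned down so that each output bit depends on genuinely all $n$ input bits in an extractable way, the rest is the routine information-theoretic argument sketched above.
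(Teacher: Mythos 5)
Your proof uses exactly the same witness as the paper — the $n$-bit parity/XOR function — and the same core observation that a sublinear-time algorithm cannot read all $n$ input bits and therefore cannot compute parity. The paper states this tersely; you flesh out the standard flip-an-unread-bit averaging argument to handle randomness and adaptivity, but the approach is identical.
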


\begin{proof}
Consider the problem {\em n-XOR},  the $XOR$ of $n$ inputs. 
This problem is in $NC^1$. 
However, no sublinear time algorithm can solve {\em n-XOR} because it 
is necessary to read all $n$ inputs.
\end{proof}

In this paper, we give techniques which allow one to construct
local computation algorithms based on algorithms for finding
certain combinatorial structures whose
existence is guaranteed by constructive
proofs of the LLL in~\cite{Bec91,Alo91}. It seems that our techniques do not extend to all such problems. 
An example of such a problem is {\em Even cycles 
in a balanced digraph}: find an even cycle in a digraph whose 
maximum in-degree is not much greater that the minimum out-degree.
Alon ~\cite{Alo91} shows that, under certain restriction on the input parameters,
the problem is in $NC^1$.
The local analogue of this question is to determine 
whether a given edge (or vertex) is part of an even cycle in such a graph.
It is not known how to solve this quickly.

\subsection{Locality-preserving reductions}

In order to understand better which problems can be solved
locally, we define
{\em locality-preserving reductions},  which capture the idea
that if problem $B$ is locally computable, and problem $A$ has a 
locality-preserving reduction to $B$ then $A$ is also locally computable.

\begin{definition}
We say that $A$ is {\em $(t'(n), s'(n))$-locality-preserving reducible} to $B$  
via reduction $H: \Sigma^{*} \to \Gamma^{*}$, where $\Sigma$ and $\Gamma$ are
the alphabets of $A$ and $B$ respectively, if
$H$ satisfies:
\begin{enumerate}
\item  $x \in A  \iff  H(x) \in B$.
\item  $H$ is $(t'(n), s'(n), 0)$-locally computable;  that is, every word of $H(x)$ can be
       computed by querying at most $t(n)$ words of $x$.
\end{enumerate}
\end{definition}

\begin{theorem}
If $A$ is $(t'(n), s'(n), 0)$-locality-preserving reducible 
to $B$ and $B$ is $(t(n), s(n), \delta(n))$-locally computable, 
then $A$ is $(t(n) \cdot t'(n), s(n)+s'(n), \delta(n))$-locally computable.
\end{theorem}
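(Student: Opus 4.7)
The plan is to compose the two algorithms in the obvious way. Let $\mcA_B$ denote the given $(t(n), s(n), \delta(n))$-LCA for $B$, and let $\mcA_H$ denote the $(t'(n), s'(n), 0)$-local algorithm that, given query access to $x$, computes any specified word of $H(x)$. To answer a query $i_j$ for the LCA I am constructing for $A$, I would simulate $\mcA_B$ on input $H(x)$ with query $i_j$, intercepting each attempted access to a word of $H(x)$ and serving it by invoking $\mcA_H$ on $x$. The output of this simulated run is then returned as the answer for $A$.

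The bulk of the proof is a straightforward accounting. First, for running time: a single query of the composed algorithm performs at most $t(n)$ steps of $\mcA_B$, and in the worst case each step requires reading one word of $H(x)$, which costs at most $t'(n)$ queries to $x$ via $\mcA_H$; this gives the claimed $t(n)\cdot t'(n)$ bound. Second, for space: the persistent memory of $\mcA_B$ occupies at most $s(n)$ words across queries, and each invocation of $\mcA_H$ can reuse the same $s'(n)$-word scratch area (cleared between calls), so a total of $s(n)+s'(n)$ words suffices. Third, for error: because $\mcA_H$ has error parameter $0$, every simulated word of $H(x)$ is delivered correctly, so the only source of failure is $\mcA_B$ itself, whose probability of error over any sequence of queries is at most $\delta(n)$.

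The only subtlety to check is consistency of answers across a sequence of queries and the search-problem interpretation of the reduction. Since all simulated queries to $H(x)$ are answered correctly and deterministically from the fixed string $H(x)$, the answers of the composed algorithm on a sequence $i_1,\dots,i_q$ arise from a single underlying execution of $\mcA_B$ on the same input $H(x)$ and hence, with probability at least $1-\delta(n)$, agree with some $y\in F(H(x))$; under the natural extension of condition (1) of the reduction to solutions, this $y$ corresponds to a valid solution of $A$ on $x$. There is no real obstacle here: the theorem is essentially a bookkeeping statement asserting that locality-preserving reductions compose multiplicatively in time and additively in space while preserving the error parameter.
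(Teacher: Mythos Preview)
Your proof is correct and follows essentially the same approach as the paper: simulate the LCA for $B$ on the virtual input $H(x)$, serving each word-access by invoking the local reduction, and do the obvious accounting for time (multiplicative), space (additive, reusing the scratch area for $\mcA_H$), and error (inherited from $\mcA_B$ alone since $\mcA_H$ is errorless). Your write-up is in fact more careful than the paper's---you explicitly address consistency across a sequence of queries and flag the mismatch between the decision-style condition $x\in A \iff H(x)\in B$ and the search-problem semantics of the model---but the underlying argument is identical.
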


\begin{proof}
As $A$ is $(t'(n), s'(n), 0)$-locality-preserving
reducible to $B$, 
to determine whether $x \in A$, 
it suffices to determine if $H(x) \in B$. Each word of $H(x)$ can 
be computed in time $t'(n)$ and using space $s'(n)$, and we need to access at most $t(n)$ 
such words to determine whether $H(x) \in B$. 
Note that we can reuse the space for computing $H(x)$.
\end{proof}

\section{Preliminaries}
\label{Sec:prelim}

Unless stated otherwise, all logarithms in this paper are to the base $2$.
Let $\N=\{0,1,\ldots\}$ denote the set of natural numbers.
Let $n\geq 1$ be a natural number. We use $[n]$ to denote the set $\{1,\ldots, n\}$.

Unless stated otherwise, all graphs are undirected. Let $G=(V, E)$ be a graph.
The \emph{distance} between two vertices $u$ and $v$ in $V(G)$, denoted by
$d_{G}(u, v)$, is the length of a shortest path between the two vertices.
We write $N_{G}(v)=\{u\in V(G): (u,v)\in E(G)\}$ to denote
the neighboring vertices of $v$.   Furthermore, 
let $N^{+}_{G}(v)=N(v)\cup \{v\}$.
Let $d_{G}(v)$ denote the degree of a vertex $v$. 
Whenever there is no risk of confusion, we omit the subscript $G$ from $d_{G}(u,v)$, $d_{G}(v)$
and $N_{G}(v)$. \\

The celebrated Lov{\'{a}}sz Local Lemma plays an important role in our results. 
We will use the simple symmetric version of the lemma.
\begin{lemma}[Lov{\'{a}}sz Local Lemma~\cite{EL75}]\label{lemma:LLL}
Let $A_{1}, A_{2}, \ldots, A_{n}$ be events in an arbitrary probability space.
Suppose that the probability of each of these $n$ events is at most $p$, 
and suppose that each event $A_{i}$ is mutually independent of all
but at most $d$ of other events $A_{j}$. If $ep(d + 1) \leq 1$, 
then with positive probability none of the events $A_{i}$ holds, i.e.,
\[
\Pr[\cap_{i=1}^{n}\bar{A}_{i}]>0.
\]
\end{lemma}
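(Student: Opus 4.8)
The plan is to obtain the symmetric statement from the general (asymmetric) form of the Local Lemma, whose proof is a short induction on conditioning sets. For each $i$ fix a set $\Gamma(i)$ of at most $d$ indices such that $A_i$ is mutually independent of the family $\{A_j : j \notin \Gamma(i)\cup\{i\}\}$. First I would reduce to the asymmetric version by choosing the uniform weights $x_i = \tfrac{1}{d+1}$ (the case $d=0$ is immediate, since then the $A_i$ are jointly independent and $p \le 1/e < 1$, so $\Pr\bigl[\bigcap_i\bar{A}_i\bigr]\ge(1-p)^n>0$) and recording the elementary inequality $\bigl(1-\tfrac{1}{d+1}\bigr)^{d} \ge \tfrac{1}{e}$; combined with the hypothesis $ep(d+1)\le 1$ this gives $\Pr[A_i] \le p \le \tfrac{1}{e(d+1)} \le \tfrac{1}{d+1}\bigl(1-\tfrac{1}{d+1}\bigr)^{d} \le x_i\prod_{j\in\Gamma(i)}(1-x_j)$, where the last step uses $|\Gamma(i)|\le d$ and $0<1-x_j\le 1$. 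So it suffices to prove: if $\Pr[A_i]\le x_i\prod_{j\in\Gamma(i)}(1-x_j)$ for all $i$, then $\Pr\bigl[\bigcap_{i}\bar{A}_i\bigr]\ge\prod_i(1-x_i)>0$.

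The core is the claim that $\Pr\bigl[A_i \mid \bigcap_{j\in S}\bar{A}_j\bigr]\le x_i$ for every $i$ and every $S\subseteq[n]\setminus\{i\}$, which I would prove by induction on $|S|$ (checking along the way that $\Pr\bigl[\bigcap_{j\in S}\bar{A}_j\bigr]>0$, so that all conditionals are defined). The base case $S=\emptyset$ is the hypothesis. For the step, split $S=S_1\cup S_2$ with $S_1 = S\cap\Gamma(i)$ and $S_2 = S\setminus\Gamma(i)$, and write the conditional probability as a quotient whose numerator is $\Pr\bigl[A_i\cap\bigcap_{j\in S_1}\bar{A}_j \mid \bigcap_{k\in S_2}\bar{A}_k\bigr]$ and whose denominator is $\Pr\bigl[\bigcap_{j\in S_1}\bar{A}_j \mid \bigcap_{k\in S_2}\bar{A}_k\bigr]$. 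The numerator is at most $\Pr\bigl[A_i \mid \bigcap_{k\in S_2}\bar{A}_k\bigr]=\Pr[A_i]\le x_i\prod_{j\in\Gamma(i)}(1-x_j)$, using that $A_i$ is mutually independent of $\{A_k:k\in S_2\}$. The denominator, expanded by the chain rule over $S_1=\{j_1,\dots,j_r\}$ as $\prod_{\ell}\bigl(1-\Pr[A_{j_\ell}\mid \bigcap_{m<\ell}\bar{A}_{j_m}\cap\bigcap_{k\in S_2}\bar{A}_k]\bigr)$, is at least $\prod_{j\in S_1}(1-x_j)\ge\prod_{j\in\Gamma(i)}(1-x_j)$ by the induction hypothesis (each conditioning set there has size $<|S|$) together with monotonicity. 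Dividing yields the bound $x_i$. Finally, one more chain-rule expansion gives $\Pr\bigl[\bigcap_{i=1}^{n}\bar{A}_i\bigr]=\prod_{i=1}^{n}\bigl(1-\Pr[A_i\mid\bigcap_{j<i}\bar{A}_j]\bigr)\ge\prod_{i=1}^{n}(1-x_i)>0$.

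The step I expect to require the most care is the bound on the numerator: the hypothesis ``mutually independent of all but at most $d$ of the other events'' must be used in its strong form — that $A_i$ is independent of the entire $\sigma$-algebra generated by $\{A_k:k\in S_2\}$, hence of the event $\bigcap_{k\in S_2}\bar{A}_k$ — rather than merely as pairwise independence; this is what legitimizes $\Pr[A_i\mid\bigcap_{k\in S_2}\bar{A}_k]=\Pr[A_i]$. The remaining ingredients — the quotient decomposition, the two chain-rule expansions, and the monotonicity $\prod_{j\in S_1}(1-x_j)\ge\prod_{j\in\Gamma(i)}(1-x_j)$, which holds because $S_1\subseteq\Gamma(i)$ and every factor lies in $(0,1]$ — are routine bookkeeping with conditional probabilities.
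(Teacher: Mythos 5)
The paper does not prove this lemma: it is stated in Section~\ref{Sec:prelim} as a known result and cited to Erd\H{o}s--Lov\'asz~\cite{EL75} without any argument, so there is no in-paper proof for your write-up to diverge from. Your proof itself is correct and is the standard one: you reduce the symmetric statement to the asymmetric Local Lemma by choosing the uniform weights $x_i=\tfrac{1}{d+1}$, and you correctly isolate the one chain of inequalities that makes this reduction go through, namely $p\le\tfrac{1}{e(d+1)}\le\tfrac{1}{d+1}\bigl(1-\tfrac{1}{d+1}\bigr)^d\le x_i\prod_{j\in\Gamma(i)}(1-x_j)$, where the middle step rests on $\bigl(1-\tfrac{1}{d+1}\bigr)^d\ge\tfrac1e$ and the last step on $|\Gamma(i)|\le d$ together with the base being in $(0,1)$. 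The induction on $|S|$ with the split $S=S_1\cup S_2$, the chain-rule lower bound on the denominator, and the use of mutual independence in its strong (sigma-algebra) form for the numerator are all exactly as in the textbook proof (e.g.\ Alon--Spencer). Two small points you handled well and should keep: the separate treatment of $d=0$ is genuinely necessary, since with $x_i=1$ the product $\prod_i(1-x_i)$ vanishes and the general argument only gives the trivial bound $\ge 0$; and the parenthetical check that $\Pr\bigl[\bigcap_{j\in S}\bar A_j\bigr]>0$ along the induction is needed so that the conditionals are well defined. In short, this is a correct, complete, and standard proof of a result the paper treats as a black box.
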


Several of our proofs use the following graph theoretic structure:
\begin{definition}[\cite{Bec91}]
Let $G=(V, E)$ be an undirected graph.
Define $W \subseteq V(G)$ to be a {\em $3$-tree}
if the pairwise distances of all vertices in $W$ are each at least $3$
and the graph $G^{*} = (W, E^*)$ is connected, 
where $E^*$ is the set of edges between
each pair of vertices whose distance is exactly $3$ in $G$.  
\end{definition}

\section{Maximal Independent Set}\label{Sec:maxis}
An independent set (IS) of a graph $G$ is a subset of vertices such that no two vertices 
are adjacent. An independent set is called a \emph{maximal independent set}
(MIS) if it is not properly contained in any other IS. 
It is well-known that a 
sequential greedy algorithm finds an MIS $S$ in linear time:
Order the vertices in $G$ as $1,2,\cdots, n$ and initialize $S$ to the empty set;
for $i=1$ to $n$, if vertex $i$ is not adjacent to any vertex in $S$, add $i$ to $S$.
The MIS obtained by this algorithm is call the \emph{lexicographically first maximal independent set} (LFMIS).
Cook~\cite{Coo85} showed that deciding if vertex $n$ is in the LFMIS is $P$-complete with
respect to logspace reducibility.
On the other hand,
fast randomized
parallel algorithms for MIS were discovered in 
1980's~\cite{KW85, Lub86, ABI86}. 
The best known distributed algorithm for MIS runs
in $O(\log^{*}n)$ rounds with a word-size of $O(\log n)$~\cite{GPS88}.
By Fact~\ref{fact:PR07}, this implies
a $d^{O(\log^{*}n \cdot \log n)}$ local computation algorithm.
In this section, we give a query oblivious and parallelizable 
local computation algorithm for MIS
based on Luby's algorithm as well as the techniques of Beck~\cite{Bec91},
and runs in time $O(d^{O(d\log d)}\cdot\log n) $.

Our local computation algorithm is partly inspired by the work of Marko and Ron~\cite{MR06}.
There they simulate the distributed algorithm for MIS of Luby~\cite{Lub86} in order
to \emph{approximate} the minimum number of edges one need to remove to make
the input graph free of some fixed graph $H$.
In addition, they show that similar algorithm can also approximate the size
of a minimum vertex cover. 
We simulate Luby's algorithm to find an \emph{exact and consistent} local solution
for the MIS problem. Moreover, the ingredient of applying Beck's idea
to run a second stage greedy algorithm on disconnected subgraphs 
seems to be new.   

\subsection{Overview of the algorithm}\label{Sec:max_is_overview}
Let $G$ be an undirected graph on $n$ vertices and with maximum degree $d$.
On input a vertex $v$, 
our algorithm decides 
whether $v$ is in a maximal independent set using two phases. 
In Phase $1$, we simulate Luby's parallel algorithm for MIS~\cite{Lub86} via
the reduction of \cite{PR07}.
That is, in each round, $v$ tries to
put itself into the IS with some small probability. 
It succeeds if none of its neighbors
also tries to do the same. We run our Phase $1$ algorithm for $O(d\log d)$ rounds.
As it turns out, after Phase $1$,
most vertices have been either added to the IS or removed from the graph due to 
one (or more) of their neighbors being in the IS. 
Our key observation is that -- following a variant of the argument of Beck~\cite{Bec91}
-- almost surely, all the connected components of the surviving vertices after Phase $1$
have size at most $O(\log n)$. This enables us to
perform the greedy algorithm for the connected component $v$ lies in.

Our main result in this section is the following.
\begin{theorem}\label{thm:MIS_main}
Let $G$ be an undirected graph with $n$ vertices and maximum degree $d$. 
Then there is a 
$(O(d^{O(d\log d)}\cdot \log{n}), O(n), 1/n)$-local computation algorithm 
which, on input a vertex $v$, 
decides if $v$ is in a maximal independent set.
Moreover, the algorithm will give a consistent MIS 
for every vertex in $G$.
\end{theorem}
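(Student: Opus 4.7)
The plan is to implement the two-phase strategy outlined in Section~\ref{Sec:max_is_overview}, and the main work is to make the Beck-style component-size bound quantitative enough to give a per-query cost of $d^{O(d\log d)}\cdot\log n$.

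First, I would set up Phase~1 as a local simulation of Luby's distributed MIS algorithm via the reduction of Parnas and Ron (Fact~\ref{fact:PR07}). Fix a shared random tape: associate to each vertex $u$ and each round $i$ an independent uniform real $r_{u,i}\in[0,1]$; this enforces that all query answers come from the same execution of Luby's algorithm, giving query-obliviousness and consistency ``for free.'' To decide $v$'s fate in round $i$, the algorithm only needs the state of the vertices at distance $\le i$ from $v$ at the beginning of round $i$, which recursively needs distance $\le 2i$ at the beginning of round $i-1$, etc. Running Phase~1 for $R = c\, d\log d$ rounds (for a constant $c$ chosen below) therefore requires exploring a BFS ball of radius $O(R)$ around $v$, touching at most $d^{O(R)}=d^{O(d\log d)}$ vertices.

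Second, I would prove the key structural lemma: with probability at least $1-1/n$, every connected component of the subgraph $G'$ induced by the vertices still ``undecided'' after Phase~1 has size $O(\log n)$. The argument follows Beck~\cite{Bec91}. In each round of Luby's algorithm a remaining vertex is removed with probability at least some constant $\alpha>0$, so after $R=c\,d\log d$ rounds
\[
\Pr[v \text{ survives Phase 1}] \le (1-\alpha)^R \le p, \quad \text{with } p \le 1/(ed^{10}),
\]
for $c$ large enough. Survival events for two vertices $u,v$ with $d_G(u,v)\ge 3$ depend on disjoint coin flips (since Luby's round-$i$ decision for $u$ depends only on vertices within distance $\le R$ of $u$, but the dependency graph on the survival events still has degree at most $d^{O(R)}$; the $3$-tree trick handles this). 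Specifically, if a surviving component $C$ has size $>k$ then $C$ contains a $3$-tree $W\subseteq C$ of size $\Omega(k/d^{2})$ whose survival events are mutually independent. The number of $3$-trees of size $t$ containing a fixed vertex is at most $(4d^3)^t$ (a standard encoding bound), so the probability some such $3$-tree of size $t$ survives is at most $n(4d^3 p)^t$, which is $\le 1/n^2$ for $t = \Theta(\log n)$ when $p$ is as above. Choosing $k=\Theta(d^2\log n)$ yields the claim.

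Phase~2 is then immediate: starting from $v$, explore the surviving subgraph $G'$ by BFS to recover the component $C_v\ni v$; by the lemma this takes $O(\log n)$ vertex visits, each costing $d^{O(d\log d)}$ queries to determine ``surviving'' status, and then run the deterministic lexicographic greedy MIS on $C_v$ to decide $v$'s bit. Consistency across queries is automatic because the random tape is fixed and the greedy rule inside a component is canonical. The per-query time is therefore $d^{O(d\log d)}\cdot \log n$, the global failure probability is $\le 1/n$, and the space used to store the random tape and visited vertices is $O(n)$ (it can in fact be made polylogarithmic, cf.~\cite{ARVX11}).

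The main obstacle is the Beck-style component bound: one must verify both the survival-probability estimate $p\le 1/(ed^{O(1)})$ after $O(d\log d)$ rounds of Luby and the combinatorial encoding of $3$-trees so that the union bound on $3$-trees of logarithmic size beats $1/n$. Choosing the Luby parameters and the constant $c$ so that these two estimates fit together is what forces the $d^{O(d\log d)}$ query cost; everything else is routine.
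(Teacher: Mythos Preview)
Your overall two-phase architecture matches the paper's, but the Beck-style component bound has a real gap at the independence step. You assert that ``survival events for two vertices $u,v$ with $d_G(u,v)\ge 3$ depend on disjoint coin flips,'' and then use this to say the events on a $3$-tree in $G$ are mutually independent. This is false for the actual Luby process: whether $v$ survives $R$ rounds depends on whether its neighbors were selected in earlier rounds, which in turn depends on \emph{their} neighbors' coins, and so on; the event $A_v$ is a function of the random bits in a ball of radius $\Theta(R)$ around $v$. Your own parenthetical concedes this (``the dependency graph on the survival events still has degree at most $d^{O(R)}$''), but then the $3$-tree union bound does not close: with dependency degree $D=d^{O(R)}=d^{O(d\log d)}$, the count of $3$-trees of size $t$ is roughly $n(4D^3)^t$, so you would need survival probability $p\ll D^{-3}=d^{-\Omega(d\log d)}$, whereas $R=O(d\log d)$ rounds only give $p=(1-\Theta(1/d))^{R}=d^{-O(1)}$. (Relatedly, your per-round removal probability $\alpha$ is $\Theta(1/d)$, not an absolute constant.)

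The paper fixes exactly this point by introducing an auxiliary process $\mathbf{MIS}_B$ in which no vertex is ever \emph{deleted}: every vertex keeps tossing its coin each round regardless of what happened to its neighbors, and $B_v$ is the event that $v$ is never the unique chooser in $N^+(v)$ over all $R$ rounds. Crucially, $B_v$ depends only on the coins of $N^+(v)$, so the dependency graph for the $B_v$'s has maximum degree $d^2$, independent of $R$. A short coupling argument shows $A_v\subseteq B_v$, and then the $3$-tree argument goes through with $p\le 1/(8d^3)$ against degree $\mathrm{poly}(d)$. Without this decoupling (or an equivalent trick), your parameters do not yield the claimed $d^{O(d\log d)}\cdot\log n$ bound.
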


\subsection{Phase 1: simulating Luby's parallel algorithm}\label{Sec:max_is_phase1}
Figure~\ref{maxis} illustrates Phase $1$ of our local computation algorithm
for Maximal Independent Set. 
Our algorithm simulates Luby's algorithm for $r=O(d\log d)$ rounds.
Every vertex $v$ will be in one of three possible states: 
\begin{itemize}
\item ``selected'' --- $v$ is in the MIS;
\item ``deleted'' --- one of $v$'s neighbors is selected and $v$ is deleted from the graph; and
\item ``$\perp$'' --- $v$ is not in either of the previous states. 
\end{itemize}
Initially, every vertex is in state ``$\perp$''.
Once a vertex becomes ``selected'' or ``deleted'' in some round, 
it remains in that state in all the subsequent rounds.

The subroutine $\mathbf{MIS}(v, i)$ returns the state of a vertex $v$ in round $i$.
In each round, if vertex $v$ is still in state ``$\perp$'', 
it ``chooses'' itself to be in the MIS with probability $1/2d$.
At the same time, all its neighboring vertices also flip random coins to decide if 
they should ``choose'' themselves.\footnote{We store all the randomness generated by 
each vertex in each round so that our answers will be consistent. However, we
generate the random bits only when the state of corresponding vertex in that round is requested.}
If $v$ is the only vertex in $N^{+}(v)$ that is chosen in that round,
we add $v$ to the MIS (``select'' $v$) and ``delete'' all the neighbors of $v$.
However, the state of $v$ in round $i$ is determined not only by the random coin tosses 
of vertices in $N^{+}(v)$ but also by these vertices' states in round $i-1$.
Therefore, to compute $\mathbf{MIS}(v, i)$, 
we need to recursively call $\mathbf{MIS}(u, i-1)$
for every $u\in N^{+}(v)$. \footnote{A subtle point in 
the subroutine $\mathbf{MIS}(v, i)$ is that when we call
$\mathbf{MIS}(v, i)$, we only check if vertex $v$ is ``selected'' or
not in round $i$. If $v$ is ``deleted'' in round $i$, 
we will not detect this until we call $\mathbf{MIS}(v, i+1)$,
which checks if some neighboring vertex of $v$ is selected in round $i$.
However, such ``delayed'' decisions will not affect our analysis of the algorithm.}
By induction, the total running time of simulating
$r$ rounds is $d^{O(r)}=d^{O(d\log d)}$.

If after Phase $1$ all vertices are either ``selected'' or ``deleted'' and
no vertex remains in ``$\perp$'' state,
then the resulting independent set is a maximal independent set.
In fact, one of the main results in~\cite{Lub86} is that this indeed is the case
if we run Phase $1$ for expected $O(\log{n})$ rounds.
Our main observation is, after simulating Luby's algorithm 
for only $O(d\log d)$ (a constant independent of the size $n$) rounds 
we are already not far from a maximal independent set. 
Specifically, if vertex $v$ returns ``$\perp$'' after Phase $1$ of the algorithm,
we call it a \emph{surviving} vertex. 
Now consider the subgraph induced on the surviving vertices. 
Following a variant of Beck's argument~\cite{Bec91}, we
show that, almost surely, no connected component of surviving vertices is larger than 
$\poly{d}\cdot\log n$.

Let $A_{v}$ be the event that vertex $v$ is a surviving vertex.
Note that event $A_v$ depends on the random 
coin tosses $v$ and $v$'s neighborhood of radius $r$
made during the first $r$ rounds, where $r=O(d\log d)$. To get rid of
the complication caused by this dependency, we consider another set of events
generated by a related random process.

\begin{figure*}
\begin{center}
\fbox{
\begin{minipage}{5in}
\small
\begin{tabbing}
\textsc{Maximal Independent Set: Phase $1$} \\
Input: a graph $G$ and a vertex $v \in V$\\
Out\=put\=:  \{``\=tru\=e'', \=``fa\=lse'', ``$\perp$''\}\\
   \> For $i$ from $1$ to $r=20d\log d$ \\
   \>\>(a) If $\mathbf{MIS}(v, i)=\text{``selected''}$ \\
     \>\>\> return ``true''\\
   \>\>(b) Else if $\mathbf{MIS}(v, i)=\text{``deleted''}$ \\
	  \>\>\> return ``false'' \\
   \>\>(c) Else \\
	\>\>\> return ``$\perp$''\\
\\

{$\mathbf{MIS}(v, i)$}\\
Input: a vertex $v \in V$ and a round number $i$\\
Out\=put\=:  \{``\=sel\=ect\=ed'', ``deleted'', ``$\perp$''\}\\
1. \> If $v$ is marked ``selected'' or ``deleted'' \\
   \>\> return ``selected'' or ``deleted'', respectively\\
2. \> For every $u$ in $N(v)$\\
   \>\> If $\mathbf{MIS}(u, i-1)=\text{``selected''}$ \\
   \>\>\> mark $v$ as `` deleted'' and return ``deleted''\\
3. \> $v$ chooses itself independently with probability $\frac{1}{2d}$\\
   \>\> If $v$ chooses itself\\
   \>\>\> (i) For every $u$ in $N(v)$\\
   \>\>\>\> If $u$ is marked ``$\perp$'', 
       $u$ chooses itself independently with probability $\frac{1}{2d}$ \\
   \>\>\> (ii) If $v$ has a chosen neighbor \\
   \>\>\>\> return ``$\perp$''\\
   \>\>\> (iii) Else\\
   \>\>\>\> mark $v$ as ``selected'' and return ``selected''\\
   \>\> Else \\
   \>\>\> return ``$\perp$''
  
\end{tabbing}
\end{minipage}
}
\end{center}
\caption{Local Computation Algorithm for MIS:  Phase $1$}
\label{maxis}
\end{figure*}

Consider a variant of our algorithm $\mathbf{MIS}$, which we call $\mathbf{MIS}_{B}$
as shown in Fig~\ref{maxis_B}.
In $\mathbf{MIS}_{B}$, every vertex $v$ has two possible states:
``picked'' and ``$\perp$''.
Initially, every vertex is in state ``$\perp$''.
Once a vertex becomes ``picked'' in some round, 
it remains in the ``picked'' state in all the subsequent rounds.
$\mathbf{MIS}_{B}$ and $\mathbf{MIS}$ are identical except that, 
in  $\mathbf{MIS}_{B}$, 
if in some round a vertex $v$ is the only vertex in $N^{+}(v)$ that is chosen in that round,
the state of $v$ becomes ``picked'',
but we do not change the states of $v$'s neighboring vertices.
In the following rounds, $v$ keeps flipping coins and tries to choose itself.
Moreover, we assume that, for any vertex $v$ and in every round, 
the randomness used in $\mathbf{MIS}$ and $\mathbf{MIS}_{B}$ are identical
as long as $v$ has not been ``selected'' or ``deleted'' in $\mathbf{MIS}$.
If $v$ is ``selected'' or ``deleted'' in $\mathbf{MIS}$,
then we flip some additional coins for $v$ in the subsequent rounds 
to run $\mathbf{MIS}_{B}$.  
We let $B_{v}$ be the event that $v$ is in state ``$\perp$'' after running 
$\mathbf{MIS}_{B}$ for $r$ rounds 
(that is, $v$ is never get picked during all $r$ rounds of $\mathbf{MIS}_{B}$).

\begin{figure*}
\begin{center}
\fbox{
\begin{minipage}{5in}
\small
\begin{tabbing}
{$\mathbf{MIS}_{B}(v, i)$}\\
Input: a vertex $v \in V$ and a round number $i$\\
Out\=put\=:  \{``\=pick\=ed'', ``$\perp$''\}\\
1. \> If $v$ is marked ``picked'' \\
   \>\> return ``picked''\\
2. \> $v$ chooses itself independently with probability $\frac{1}{2d}$\\
   \>\> If $v$ chooses itself\\
   \>\>\> (i) For every $u$ in $N(v)$\\
   \>\>\>\>  $u$ chooses itself independently with probability $\frac{1}{2d}$ \\
   \>\>\> (ii) If $v$ has a chosen neighbor \\
   \>\>\>\> return ``$\perp$''\\
   \>\>\> (iii) Else\\
   \>\>\>\> mark $v$ as ``picked'' and return ``picked''\\
   \>\> Else \\
   \>\>\> return ``$\perp$''
  
\end{tabbing}
\end{minipage}
}
\end{center}
\caption{Algorithm $\mathbf{MIS}_{B}$}
\label{maxis_B}
\end{figure*}

\begin{claim}
$A_{v} \subseteq B_{v}$ for every vertex $v$.\footnote{Strictly speaking, 
the probability spaces in which $A_{v}$ and $B_{v}$ live are different.
Here the claim holds for any fixed outcomes of all the additional random coins
$\mathbf{MIS}_{B}$ flips.}
\end{claim}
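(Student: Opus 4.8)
The plan is to show that for every fixed vertex $v$, if the event $A_v$ occurs (i.e., $v$ survives in state ``$\perp$'' after $r$ rounds of $\mathbf{MIS}$), then the event $B_v$ occurs (i.e., $v$ is never ``picked'' in $r$ rounds of $\mathbf{MIS}_B$), under the coupling that the two processes share the same random coins for $v$ (and its neighbors) as long as $v$ has not been ``selected'' or ``deleted'' in $\mathbf{MIS}$. The natural approach is a round-by-round induction comparing the two executions. First I would set up the coupling carefully: fix an outcome of all random coins, including the extra coins $\mathbf{MIS}_B$ uses once a vertex leaves the ``$\perp$'' state in $\mathbf{MIS}$; then argue that on any sample point in $A_v$, the coins seen by $v$ in rounds $1,\dots,r$ are literally the same in both processes (since $A_v$ says $v$ stays ``$\perp$'' in $\mathbf{MIS}$ throughout, so the ``as long as'' clause of the coupling is in force for $v$ for all $r$ rounds).

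The key step is the following observation about a single round $i$, assuming $v$ is still ``$\perp$'' in $\mathbf{MIS}$ at the start of round $i$ and still ``$\perp$'' (not ``picked'') in $\mathbf{MIS}_B$ at the start of round $i$. In $\mathbf{MIS}$, $v$ becomes ``selected'' in round $i$ exactly when $v$ chooses itself and no neighbor $u \in N(v)$ that is still ``$\perp$'' chooses itself. In $\mathbf{MIS}_B$, $v$ becomes ``picked'' in round $i$ exactly when $v$ chooses itself and \emph{no} neighbor $u \in N(v)$ chooses itself — here every neighbor flips a coin, regardless of its state. So the event ``$v$ picked in round $i$ of $\mathbf{MIS}_B$'' is a \emph{subset} of the event ``$v$ selected in round $i$ of $\mathbf{MIS}$'' (more neighbors get a chance to block $v$ in $\mathbf{MIS}_B$, using the same shared coins). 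Contrapositively, if $v$ is \emph{not} selected in round $i$ of $\mathbf{MIS}$ and not deleted either (which is the situation guaranteed by $A_v$), then $v$ is \emph{not} picked in round $i$ of $\mathbf{MIS}_B$. One subtlety to handle cleanly is the ``delayed deletion'' remark in the paper: a vertex deleted in round $i$ is only detected in round $i+1$; but on the event $A_v$ the vertex $v$ is never deleted at all, so this does not interfere, and I would just remark that $A_v$ forces $v$ to be ``$\perp$'' (in the sense of never selected and never having a selected neighbor) through all $r$ rounds.

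Putting it together: on any outcome in $A_v$, an induction on $i = 1, \dots, r$ shows $v$ is in state ``$\perp$'' in $\mathbf{MIS}_B$ after round $i$ — the base case $i=0$ is the common initialization, and the inductive step is exactly the single-round observation above, using that the coins are shared. Hence $v$ is never picked in $\mathbf{MIS}_B$, i.e., $B_v$ holds. Since this holds for every fixed outcome of the extra coins, we get $A_v \subseteq B_v$ in the sense of the footnote. I expect the main (minor) obstacle to be purely expository: stating the coupling precisely enough that ``same randomness'' is unambiguous for neighbors of $v$ that change state, and making sure the neighbor-coin comparison is stated in the right direction (every neighbor flips in $\mathbf{MIS}_B$, only ``$\perp$'' neighbors effectively flip in $\mathbf{MIS}$), so that the blocking event can only grow. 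There is no hard mathematics here; it is a structural containment argument via a monotone coupling.
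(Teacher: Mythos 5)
Your proposal is correct and matches the paper's argument: both proceed by a round-by-round induction, using that in $\mathbf{MIS}_B$ every neighbor of $v$ flips a coin while in $\mathbf{MIS}$ only the still-``$\perp$'' neighbors do, so under the shared coins the ``picked'' event in $\mathbf{MIS}_B$ is contained in the ``selected'' event in $\mathbf{MIS}$. The only cosmetic difference is that the paper phrases it as the contrapositive containment $\overline{B}_v \subseteq \overline{A}_v$ via first-time-picked events, whereas you condition on $A_v$ directly, but the core coupling and monotonicity observation are identical.
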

\begin{proof}
This follows from the facts that
a necessary condition for $A_{v}$ to happen is $v$ never get ``selected''
in any of the $r$ rounds and 
deleting the neighbors of $v$ from the graph can never decrease 
the probability that $v$ gets ``selected'' in any round.
Specifically, we will show that $\overline{B}_{v} \subseteq \overline{A}_{v}$.

Note that $\overline{B}_{v}=\cup_{i=1}^{r}\overline{B}^{(i)}_{v}$,
where $\overline{B}^{(i)}_{v}$ is the event that $v$ is picked 
for the first time in round $i$ in $\mathbf{MIS}_{B}$ ($v$ may get picked
again in some subsequent rounds).
Similarly, $\overline{A}_{v}=\cup_{i=1}^{r}\overline{A}^{(i)}_{v}$,
where $\overline{A}^{(i)}_{v}$ is the event 
that $v$ is selected or deleted in round $i$ in $\mathbf{MIS}$.
Hence, we can write, for every $i$, 
$\overline{A}^{(i)}_{v}=\mathrm{Sel}^{(i)}_{v} \cup \mathrm{Del}^{(i)}_{v}$,
where $\mathrm{Sel}^{(i)}_{v}$ and $\mathrm{Del}^{(i)}_{v}$
are the events that, in $\mathbf{MIS}$, 
$v$ gets selected in round $i$ and $v$ get deleted in round $i$, respectively.

We prove by induction on $i$ that 
$\cup_{j=1}^{i}\overline{B}^{(j)}_{v} \subseteq \cup_{j=1}^{i}\overline{A}^{(j)}_{v}$.
This is clearly true for $i=1$ as $\overline{B}^{(1)}_{v}=\mathrm{Sel}^{(1)}_{v}$.
Assume it holds for all smaller values of $i$. 
Consider any fixed random coin tosses of all the vertices in the graph before round $i$ such that
$v$ is not ``selected'' or ``deleted'' before round $i$.
Then by induction hypothesis, $v$ is not picked in $\mathbf{MIS}_{B}$
before round $i$ either. 
Let $N^{(i)}(v)$ be the set of neighboring nodes of $v$ that are in state ``$\perp$''
in round $i$ in algorithm $\mathbf{MIS}_{B}$. Clearly, $N^{(i)}(v) \subseteq N(v)$.

Now for the random coins tossed in round $i$, we have
\begin{align*}
 \overline{B}^{(i)}_{v} 
 &= \{\text{$v$ chooses itself in round $i$} \} \cap_{w\in N(v)} 
    \{\text{$w$ does not chooses itself in round $i$} \} \\
 &\subseteq \{\text{$v$ chooses itself in round $i$} \} \cap_{w\in N^{(i)}(v)} 
    \{\text{$w$ does not chooses itself in round $i$} \} \\
 &= \mathrm{Sel}^{(i)}_{v}.
\end{align*}
Therefore, $\overline{B}^{(i)}_{v} \subseteq \mathrm{Sel}^{(i)}_{v} \subseteq \overline{A}^{(i)}_{v}$.
This finishes the inductive step and thus
completes the proof of the claim.
\end{proof}

As a simple corollary, we immediately have
\begin{corollary}
For any vertex set $W \subset V(G)$, $\Pr[\cap_{v\in W}A_{v}] \leq \Pr[\cap_{v\in W}B_{v}]$.
\end{corollary}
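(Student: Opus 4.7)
\begin{proofof}{Corollary}
The plan is to derive the corollary as an immediate consequence of the preceding claim by a coupling argument, followed by monotonicity of probability under set inclusion.

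First I would set up a joint probability space in which both $\mathbf{MIS}$ and $\mathbf{MIS}_{B}$ are run simultaneously, using the coupling already described before the claim: in every round and for every vertex $v$, the two algorithms consume identical random bits as long as $v$ has not been ``selected'' or ``deleted'' in $\mathbf{MIS}$, and $\mathbf{MIS}_{B}$ draws extra independent bits for $v$ once it has. On this joint space, both $A_v$ and $B_v$ are well defined events, and the preceding claim (as explained in its footnote) shows that $A_v \subseteq B_v$ pointwise, i.e., for every outcome of all the coins including the extra bits used by $\mathbf{MIS}_{B}$.

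Next, since set inclusion is preserved under arbitrary intersection, from $A_v \subseteq B_v$ for every $v\in W$ I get
\[
\bigcap_{v\in W} A_v \;\subseteq\; \bigcap_{v\in W} B_v
\]
as events in the coupled probability space. Monotonicity of probability then yields $\Pr[\cap_{v\in W} A_v] \leq \Pr[\cap_{v\in W} B_v]$. Finally, because the marginal distributions of the coupled space on the coins driving $\mathbf{MIS}$ alone and on the coins driving $\mathbf{MIS}_{B}$ alone agree with the original distributions of the two algorithms, the two probabilities equal the probabilities of the corresponding events in the original (unjoined) spaces, which is exactly what the corollary asserts.

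The only real subtlety — and the one place worth being careful — is the footnote's warning that $A_v$ and $B_v$ a priori live in different probability spaces. The coupling described above is precisely what legitimizes comparing their probabilities; without it, the phrase $\Pr[\cap_v A_v] \leq \Pr[\cap_v B_v]$ would not be well posed. Everything else is routine.
\end{proofof}
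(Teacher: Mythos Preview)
Your argument is correct and is exactly the intended one: the paper treats this corollary as immediate from the preceding claim $A_v\subseteq B_v$ (under the coupling already described), so intersecting over $W$ and applying monotonicity of probability is all that is needed. Your explicit discussion of the coupling that reconciles the two probability spaces is a welcome elaboration of what the paper leaves implicit.
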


A graph $H$ on the vertices $V(G)$ 
is called a \emph{dependency graph} for $\{B_{v}\}_{v\in V(G)}$ if for all $v$ the
event $B_{v}$ is mutually independent of all $B_{u}$ such that $(u,v)\notin H$.

\begin{claim}\label{claim:survived_degree}
The dependency graph $H$ has maximum degree $d^2$.
\end{claim}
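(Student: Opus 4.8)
The claim asserts that the dependency graph $H$ for $\{B_v\}_{v \in V(G)}$ has maximum degree $d^2$. The plan is to identify precisely which random variables the event $B_v$ depends on, and then to bound how many other vertices $u$ can share a random variable with $v$. First I would observe that in $\mathbf{MIS}_B$, the event $B_v$ (that $v$ is never picked during the $r$ rounds) is determined entirely by the coin tosses made, in each round, by $v$ itself and by the vertices in $N(v)$ — because step 2 of $\mathbf{MIS}_B(v,i)$ only ever consults the coin of $v$ and the coins of $v$'s neighbors, and unlike $\mathbf{MIS}$ there is no propagation of ``deleted'' states, so there is no recursion into a larger neighborhood. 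Thus $B_v$ is a function only of the randomness ``located'' at the vertices of $N^+(v)$.

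Next I would argue the independence structure: if $u$ and $v$ are such that $N^+(u) \cap N^+(v) = \emptyset$, then $B_u$ and $B_v$ depend on disjoint collections of independent coin tosses, hence are mutually independent; more carefully, $B_v$ is mutually independent of the family of all $B_u$ with $N^+(u) \cap N^+(v) = \emptyset$, since that whole family is a function of coins disjoint from those determining $B_v$. Therefore we may take $H$ to have an edge $(u,v)$ only when $N^+(u) \cap N^+(v) \neq \emptyset$, i.e. when $d_G(u,v) \le 2$. It then remains to count: the number of vertices $u \neq v$ with $d_G(u,v) \le 2$ is at most $d + d(d-1) = d^2$ (at most $d$ neighbors, each with at most $d-1$ further neighbors), so $\deg_H(v) \le d^2$.

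I do not expect any serious obstacle here; the one point requiring a little care is the precise justification of mutual independence (as opposed to pairwise independence) of $B_v$ from the relevant family — this follows because the entire sub-$\sigma$-algebra generated by $\{B_u : N^+(u)\cap N^+(v) = \emptyset\}$ is contained in the $\sigma$-algebra generated by coins not touching $N^+(v)$, and the coins are jointly independent. A second minor subtlety is bookkeeping of the ``additional coins'' that $\mathbf{MIS}_B$ may flip for vertices already selected or deleted in $\mathbf{MIS}$: these are still independent and still attached to individual vertices, so the locality argument is unaffected. The only genuinely quantitative step is the degree count $|\{u : 1 \le d_G(u,v) \le 2\}| \le d^2$, which is immediate from the maximum-degree-$d$ assumption.
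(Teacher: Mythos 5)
Your proof is correct and takes essentially the same approach as the paper: both observe that $B_v$ depends only on the coins at $N^+(v)$, conclude mutual independence when $d_G(u,v)\ge 3$, and bound the number of vertices within distance $2$ by $d + d(d-1) = d^2$. You spell out the mutual-independence and counting steps slightly more carefully than the paper does, but the argument is the same.
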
 
\begin{proof}
Since for every vertex $v$, $B_{v}$ depends only on the
coin tosses of $v$ and vertices in $N(v)$ in each of the $r$ rounds,
the event $B_{v}$ is independent of all $A_{u}$ such that
$d_{H}(u, v)\geq 3$. The claim follows as there are
at most $d^2$ vertices at distance $1$ or $2$ from $v$.
\end{proof}

\begin{claim}\label{claim:survived_prob}
For every $v\in V$, the probability that $B_{v}$ occurs is at most $1/8d^3$.
\end{claim}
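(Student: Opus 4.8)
The plan is to analyze $\mathbf{MIS}_{B}$ round by round: I will show that, conditioned on $v$ not having been picked in any earlier round, $v$ becomes picked in the current round with probability $\Omega(1/d)$, and then, since distinct rounds use independent coin tosses, $\Pr[B_v]$ decays geometrically in the number of rounds $r = 20d\log d$, which is comfortably enough. (Throughout I may assume $d \geq 2$, since the MIS problem on graphs of maximum degree $\leq 1$ is trivial.)

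First I would fix $v$ and a round $i$, and condition on any outcome of the coin tosses of rounds $1,\ldots,i-1$ for which $v$ is still in state ``$\perp$'' at the start of round $i$ (if $v$ is already ``picked'' there is nothing left to show). In round $i$ of $\mathbf{MIS}_{B}$, the vertex $v$ chooses itself with probability $\frac{1}{2d}$, and, conditioned on that, each of its at most $d$ neighbors independently chooses itself with probability $\frac{1}{2d}$; $v$ gets picked precisely when $v$ chooses itself and none of its neighbors does. Since all these coin flips are mutually independent,
\[
\Pr[\,v \text{ picked in round } i \mid v \text{ still } \perp\,]
= \frac{1}{2d}\Bigl(1 - \frac{1}{2d}\Bigr)^{|N(v)|}
\geq \frac{1}{2d}\Bigl(1 - \frac{1}{2d}\Bigr)^{d}
\geq \frac{1}{2d}\cdot\frac{1}{2} = \frac{1}{4d},
\]
where the last step uses Bernoulli's inequality $(1-\frac{1}{2d})^{d} \geq 1 - d\cdot\frac{1}{2d} = \frac12$. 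Note that here I am using the feature of $\mathbf{MIS}_{B}$ (as opposed to $\mathbf{MIS}$) that \emph{all} of $v$'s neighbors flip coins every round, so the number of competitors is always at most $d$, regardless of history.

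Next I would observe that the event ``$v$ is still $\perp$ at the start of round $i$'' is determined by the coin tosses of rounds $1,\ldots,i-1$ only, hence is independent of the fresh randomness of round $i$. Writing $B_v$ for the event that $v$ is never picked during all $r$ rounds, this lets me multiply the per-round bounds:
\[
\Pr[B_v] \leq \prod_{i=1}^{r}\Pr[\,v \text{ not picked in round } i \mid v \text{ still } \perp\,]
\leq \Bigl(1 - \frac{1}{4d}\Bigr)^{r}.
\]
Finally I would substitute $r = 20d\log d$ and use $1 - x \leq e^{-x}$ to get $\Pr[B_v] \leq e^{-r/(4d)} = e^{-5\log d} = d^{-5/\ln 2}$, and since $5/\ln 2 > 7$ this is at most $\frac{1}{8d^{3}}$ for every $d \geq 2$, as desired.

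The only real subtlety, and the one point I would state carefully, is the conditioning argument that licenses multiplying the per-round failure probabilities: one must verify that ``$v$ still $\perp$ before round $i$'' is measurable with respect to the earlier rounds, so that round $i$'s coins are independent of it. Everything else is an elementary estimate, and the generous constant $20$ in the definition of $r$ leaves ample slack.
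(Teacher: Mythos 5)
Your proof is correct and follows the same route as the paper's: bound the per-round probability that $v$ gets picked from below by $\frac{1}{2d}\bigl(1-\frac{1}{2d}\bigr)^{|N(v)|}\ge\frac{1}{4d}$, then take the product over the $r=20d\log d$ rounds and plug in. The one place you are more careful than the paper is the step that licenses multiplying the per-round failure probabilities: the paper simply writes $\Pr[B_v]\le(1-\frac{1}{4d})^r$ after stating the per-round bound, whereas you explicitly note that ``$v$ still $\perp$ at the start of round $i$'' is determined by the coins of rounds $1,\dots,i-1$ and hence is independent of the fresh round-$i$ coins, which is exactly the chain-rule justification the paper leaves implicit. (One could equivalently observe that, with the dummy coins $\mathbf{MIS}_B$ flips even after $v$ is picked, the events ``$v$ chooses itself and no neighbor does in round $i$'' are fully independent across $i$, so the product bound is immediate.) Your final numerical check using $\log=\log_2$ and $d\ge 2$ is also right; note that the $d\ge 2$ caveat is genuinely needed, since for $d=1$ the round count $r=20d\log d$ vanishes and the bound would be vacuous.
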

\begin{proof}
The probability that vertex $v$ is chosen in round $i$ is $\frac{1}{2d}$. 
The probability that none of its neighbors is chosen in this round is 
$(1-\frac{1}{2d})^{d(v)} \geq (1-\frac{1}{2d})^{d} \geq 1/2$. 
Since the coin tosses of $v$ and vertices in $N(v)$ are independent,
the probability that $v$ is selected in round $i$ is at least 
$\frac{1}{2d}\cdot \frac{1}{2}=\frac{1}{4d}$.
We get that the probability that $B_{v}$ happens
is at most $(1-\frac{1}{4d})^{20d\log d} \leq \frac{1}{8d^3}$.
\end{proof}

Now we are ready to prove the main lemma for our local computation algorithm for MIS.
\begin{lemma}\label{lemma:survived_size}
After Phase $1$, with probability at least $1-1/n$, all connected components 
of the surviving vertices are of size at most $O(\poly{d} \cdot \log n)$.
\end{lemma}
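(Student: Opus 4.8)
The plan is to run the standard ``$3$-tree'' argument of Beck~\cite{Bec91} on the events $\{B_v\}_{v \in V(G)}$ rather than directly on the surviving vertices. First I would record the two facts already established: by Claim~\ref{claim:survived_prob} each $B_v$ occurs with probability at most $1/8d^3 =: p$, and by Claim~\ref{claim:survived_degree} the dependency graph $H$ of the $B_v$ has maximum degree $D \le d^2$. Note $D^2 \cdot e \cdot p < 1$ with lots of room to spare, which is exactly the regime where Beck's analysis applies. By the Corollary following the first claim, $\Pr[\cap_{v \in W} A_v] \le \Pr[\cap_{v \in W} B_v]$ for every vertex set $W$, so it suffices to bound the probability that the surviving set (for the $B$-process) contains a large connected component in $G$.

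The key combinatorial step is the observation that if $G$ has a connected set $U$ of surviving vertices of size $m$, then one can greedily extract inside $U$ a subset $W$ that forms a $3$-tree in $G$ (pairwise $G$-distance $\ge 3$, connected under the ``distance-exactly-$3$'' edge relation) of size $|W| \ge m / (d^{?})$ — concretely, since a connected set of size $m$ in a bounded-degree graph contains an ``independent-at-distance-$3$'' subset of size $\Omega(m/d^2)$ that can be chosen to remain connected in the auxiliary graph $G^*$, we get $|W| = \Omega(m / d^c)$ for a small constant $c$. Hence it is enough to show that with probability $\ge 1 - 1/n$ there is no $3$-tree $W$ in $G$ of size $t := \Theta(\log n)$ all of whose vertices survive. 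I would then union-bound over all such $W$: first bound the number of $3$-trees of size $t$ containing a fixed vertex $v$ by $(cd^3)^t$ for a constant $c$ (standard: a tree on $t$ nodes has fewer than $4^t$ shapes, and each new vertex sits at distance $3$ from an existing one, giving at most $d^3$ choices), then multiply by $n$ for the choice of $v$; and second, bound $\Pr[\text{all of }W\text{ survive}]$ by $p^{t}$. Here is the one genuinely delicate point, which is also where I expect the main obstacle: the events $B_v$ for $v \in W$ are \emph{not} mutually independent, but because the vertices of $W$ are pairwise at distance $\ge 3$ in $G$ — hence pairwise non-adjacent in the dependency graph $H$ (which only connects vertices at $G$-distance $\le 2$) — the events $\{B_v : v \in W\}$ \emph{are} mutually independent, so $\Pr[\cap_{v\in W} B_v] = \prod_{v \in W} \Pr[B_v] \le p^{|W|}$. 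This is precisely why the $3$-tree structure, rather than an arbitrary connected subgraph, is the right object.

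Putting it together: the expected number of surviving $3$-trees of size $t$ is at most $n \cdot (cd^3)^t \cdot p^t = n \cdot (cd^3 / 8d^3)^t = n \cdot (c/8)^t$. Choosing $t = K \log n$ for a suitable constant $K$ makes $(c/8)^t \le n^{-2}$, so this expectation is at most $1/n$, and by Markov's inequality with probability at least $1 - 1/n$ no surviving $3$-tree of size $t$ exists. By the extraction step above, the absence of a surviving $3$-tree of size $\Theta(\log n)$ forces every surviving connected component of $G$ to have size $O(d^c \log n) = O(\mathrm{poly}(d)\cdot\log n)$. Invoking $\Pr[\cap_{v\in W}A_v] \le \Pr[\cap_{v\in W}B_v]$ once more transfers this from the $B$-process to the actual Phase~$1$ surviving vertices, completing the proof. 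The main thing to get right in the write-up is the precise counting of $3$-trees and the exact polynomial-in-$d$ loss in the extraction lemma; the probabilistic core is immediate once independence across the $3$-tree is noted.
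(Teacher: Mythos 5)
Your proposal matches the paper's proof essentially step for step: pass from the $A_v$'s to the $B_v$'s via the corollary, exploit mutual independence of the $B_v$'s across a $3$-tree (pairwise $G$-distance $\ge 3$, so the relevant neighborhoods are disjoint), bound the number of $3$-trees of size $t$ by $n(4d^3)^t$, take $t=\Theta(\log n)$ and apply Markov, and finish by extracting a $3$-tree of size $\Omega(s/\mathrm{poly}(d))$ from any connected surviving component of size $s$. The only cosmetic difference is your aside that $D^2 e p < 1$, which the paper neither states nor uses.
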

\begin{proof}
Note that we may upper bound the probability that all vertices in $W$ are surviving vertices 
by the probability that all the events $\{B_v\}_{v\in W}$ happen simultaneously: 
\begin{align*}
&\quad\Pr[\text{all vertices in $W$ are surviving vertices}]\\
&=\Pr[\cap_{v\in W} A_{v}]  \\
&\leq\Pr[\cap_{v\in W} B_{v}].
\end{align*}
The rest of the proof is similar to that of Beck~\cite{Bec91}.
We bound the number of $3$-trees in $H$ (the dependency graph 
for events $\{B_{v}\}$) of $3$-trees of size $w$ as follows. 

Let $H^3$ denote the ``distance-$3$'' graph of $H$, that is,
vertices $u$ and $v$ are connected in $H^3$ if their distance in $H$ is exactly $3$.
We claim that, for any integer $w>0$, 
the total number of $3$-trees of size $w$ in $H^3$ is at most $n(4d^{3})^{w}$.
To see this, first note that the number of 
non-isomorphic trees on $w$ vertices is at most $4^{w}$ (see e.g.~\cite{Lov93}). 
Now fix one such tree and denote it by $T$. 
Label the vertices of $T$ by $v_1, v_2, \ldots, v_w$ in a way 
such that for any $j>1$, vertex $v_j$ is adjacent to some $v_i$ with $i<j$ in $T$. 
How many ways are there to choose $v_1, v_2, \ldots , v_w$ from $V(H)$ so that
they can be the set of vertices in $T$? 
There are $n$ choices for $v_{1}$. 
As $H^3$ has maximum degree $D=d(d-1)^{2}<d^{3}$,
therefore there are at most $D$ possible choices for $v_2$. 
and by induction there are at most 
$nD^{w-1}<n{d^{3w}}$ possible vertex combinations for $T$. 
Since there can be at most $4^{w}$ different $T$'s, it follows that there are at most
$n(4d^3)^w$ possible $3$-trees in $G$.

Since all vertices in $W$ are at least $3$-apart,
all the events $\{B_{v}\}_{v\in W}$ are mutually independent.
Therefore we may upper bound the probability that all vertices in $W$ are surviving vertices as
\begin{align*}
&\quad\Pr[\cap_{v\in W} B_{v}]  \\
&=\prod_{v\in W}\Pr[B_{v}] \\
&\leq\left(\frac{1}{8d^3}\right)^{w},
\end{align*}
where the last inequality follows from Claim~\ref{claim:survived_prob}. 

Now the expected number of $3$-trees of size $w$ is at most
\[
n(4d^{3})^{w}\left(\frac{1}{8d^3}\right)^{w}=n2^{-w}\leq 1/n,
\]
for $w=c_{1}\log{n}$, where $c_{1}$ is some constant. 
By Markov's inequality, with probability at least $1-1/n$,
there is no $3$-tree of size larger than $c_{1}\log{n}$. 
By a simple variant of the $4$-tree
Lemma in~\cite{Bec91} 
(that is, instead of the ``$4$-tree lemma'', we need a ``$3$-tree lemma'' here), 
we see that a connected component of size $s$ in $H$ contains 
a $3$-tree of size at least $s/d^3$. 
Therefore, with probability at least $1-1/n$,
there is no connected surviving vertices of size at least $O(\poly{d} \cdot \log n)$
at the end of Phase $1$ of our algorithm.
\end{proof}

\subsection{Phase 2: Greedy search in the connected component}\label{Sec:max_is_phase2}
If $v$ is a surviving vertex after Phase $1$, 
we need to perform Phase $2$ of the algorithm. 
In this phase, we first explore $v$'s connected component, $C(v)$, in the graph induced 
on $G$ by all the vertices in state ``$\perp$''.
If the size of $C(v)$ is larger than $c_{2}\log n$ for some constant $c_{2}$ 
which depends only on $d$, 
we abort and output ``Fail''.  
Otherwise, we perform the simple greedy algorithm described at the beginning of this section 
to find the MIS in $C(v)$. 
To check if any single vertex is in state ``$\perp$'', 
we simply run our Phase $1$ algorithm on this vertex 
and the running time is $d^{O(d\log d)}$ for each vertex in $C(v)$. 
Therefore the running time for Phase $2$
is at most $O(|C(v)|) \cdot d^{O(d\log d)} \leq O(d^{O(d\log d)}\cdot \log n)$. 
As this dominates the running time of Phase $1$, it is also
the total running time of our local computation algorithm for MIS.

Finally, as we only need to store the random bits generated by each vertex during Phase $1$ of
the algorithm and bookkeep the vertices in the connected component during Phase $2$ 
(which uses at most $O(\log n)$ space), the space complexity of the local computation algorithm
is therefore $O(n)$.

\section{Radio Networks}\label{Sec:radio}
For the purposes of 
this section, a  \emph{radio network} is an undirected graph $G=(V, E)$ with one processor at each vertex.
The processors communicate with each other by transmitting messages in a synchronous
fashion to their neighbors.
In each round, a processor $P$ can either receive a message,
send messages to all of its neighbors, or do nothing.
We will focus on the radio network that is 
referred to as a 
\emph{Type II network}:\footnote{The other model, Type I radio network, is more restrictive: 
A processor $P$ receives a message from its neighbor $Q$ in a given round only if 
$P$ is silent, $Q$ transmits and $P$ chooses to receive from $Q$ in that round.} in \cite{Alo91}.
$P$ receives a message from its neighbor $Q$ if $P$ is silent, 
and $Q$ is the only neighbor of $P$ that transmits in that round. 
Our goal is to check whether there is a two-way connection between 
each pair of adjacent vertices. 
To reach this goal, we would like to find 
a schedule such that each vertex in $G$ broadcasts in one of the $K$ rounds
and $K$ is as small as possible.
\begin{definition}[Broadcast function]
Let $G=(V,E)$ be an undirected graph. We say
$F_{r}: V \to [K]$ is a \emph{broadcast function} for the network $G$
if the following holds:
\begin{enumerate}
\item Every vertex $v$ broadcasts once and only once in round $F_{r}(v)$ 
to all its neighboring vertices;
\item No vertex receives broadcast messages from more than one neighbor in any round;
\item For every edge $(u, v)\in G$, $u$ and $v$ broadcast in 
distinct rounds.
\end{enumerate}
\end{definition}

Let $\Delta$ be the maximum degree of $G$.
Alon et. al.~\cite{ABLP89, ABLP92} show that 
the minimum number of rounds $K$ satisfies $K=\Theta(\Delta \log \Delta)$.
Furthermore, Alon~\cite{Alo91} gives an $NC_{1}$ algorithm that computes 
the broadcast function $F_{r}$ with $K=O(\Delta \log \Delta)$.
Here we give a local computation algorithm for this problem, 
i.e. given a degree-bounded graph $G=(V,E)$ in the adjacency list form and a vertex $v \in V$, we
output the round number in which $v$ broadcasts in logarithmic time. 
Our solution is \emph{consistent} in the sense that 
all answers our algorithm outputs to the various $v \in V$
agree with some broadcast scheduling function $F_{r}$. 

Let $G^{1,2}$ be the ``square graph'' of $G$; that is, $u$ and $v$ are
connected in $G^{1,2}$
if and only if their distance in $G$ is either one or two.
Our algorithm is based on finding an \emph{independent set cover} of $G^{1,2}$
which simulates Luby's Maximal Independent Set algorithm~\cite{Lub86}.
Note that if we denote the maximum degree of $G^{1,2}$ by $d$, then $d\leq \Delta^{2}$.

\begin{definition}[Independent Set Cover]
Let $H=(V,E)$ be an undirected graph.
A collection of vertex subsets 
$\{S_{1}, \ldots, S_{t}\}$ is an \emph{independent set cover} (ISC) for 
$H$ if these vertex sets are pairwise disjoint, each $S_{i}$ is an independent set
in $H$ and their union equals $V$. We call $t$ the \emph{size} of ISC $\{S_{1}, \ldots, S_{t}\}$.
\end{definition}

\begin{fact}\label{fact:ISC}
If $\{S_{1}, \ldots, S_{t} \}$ is an ISC for 
$G^{1,2}$, then the function defined by $F_r(v)=i$ iff $v\in S_{i}$ is a broadcast function.
\end{fact}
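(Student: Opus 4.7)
The plan is to verify the three defining conditions of a broadcast function in turn, using only the three defining properties of an ISC (pairwise disjointness, covering, and independence in $G^{1,2}$).

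First I would handle condition (1). Since the sets $S_1, \ldots, S_t$ are pairwise disjoint and their union equals $V$, every vertex $v$ lies in exactly one $S_i$, so $F_r$ is well-defined and $v$ broadcasts in exactly one round.

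Next, I would verify condition (3), which is essentially immediate. If $(u,v) \in E(G)$, then $d_G(u,v) = 1$, so by the definition of $G^{1,2}$ the pair $\{u,v\}$ is an edge of $G^{1,2}$. Since each $S_i$ is independent in $G^{1,2}$, $u$ and $v$ cannot lie in the same $S_i$, and hence $F_r(u) \neq F_r(v)$.

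The slightly less obvious step is condition (2), which is the one I would treat as the main point. Suppose for contradiction some vertex $P$ receives transmissions from two distinct neighbors $Q_1, Q_2$ in the same round $i$, so $Q_1, Q_2 \in S_i$. Since $Q_1$ and $Q_2$ are both adjacent to $P$ in $G$, we have $d_G(Q_1, Q_2) \leq 2$. Hence $Q_1$ and $Q_2$ are joined by an edge in $G^{1,2}$, contradicting the independence of $S_i$ in $G^{1,2}$. Thus no vertex receives from more than one neighbor in any round.

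Combining the three checks gives the claim. The only conceptually nontrivial ingredient is the observation built into the definition of $G^{1,2}$: taking distance-$2$ pairs into edges is precisely what enforces the ``collision-free reception'' condition (2), while distance-$1$ pairs enforce condition (3). Once this is noted, the verification is essentially a restatement of the definitions, so no further technical obstacle is expected.
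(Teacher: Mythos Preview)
Your proof is correct and follows essentially the same approach as the paper: both verify well-definedness from the partition property and derive the collision-free conditions from the fact that independence in $G^{1,2}$ forces pairwise distance at least $3$ in $G$. Your version is slightly more explicit in separating out condition~(3), whereas the paper folds it into the same distance-$\geq 3$ observation, but the argument is the same.
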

\begin{proof}
First note that, since the union of $\{S_{i}\}$ equals $V$, $F_{r}(v)$ is well-defined for 
every $v\in G$. That is, every $v$ broadcasts 
in some round in $[t]$, hence both directions of
every edge are covered in some round. 
As $v$ can only be in one $IS$, it only broadcasts once.
Second, for any two vertices $u$ and $v$, if $d(u, v)\geq 3$, 
then $N(u)\cap N(v)=\emptyset$. It follows that, if in each round all the 
vertices that broadcast are at least $3$-apart from each other, no vertex will receive
more than one message in any round. Clearly the vertices in an independent set of $G^{1,2}$
have the property that all the pairwise distances are at least $3$.
\end{proof}

The following is a simple fact about ISCs.
\begin{fact}\label{fact:ISC_alg}
For every undirected graph $H$ on $n$ vertices with maximum degree $d$, 
there is an ISC of size at most $d$. Moreover, such an ISC can be found 
by a greedy algorithm in time at most $O(dn)$.
\end{fact}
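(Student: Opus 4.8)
The plan is to build the ISC with one left-to-right greedy pass over the vertices of $H$ (first-fit colouring) and then check, in turn, the ISC axioms, the size bound, and the running-time bound.

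\textbf{Step 1: the greedy algorithm.} Fix an arbitrary ordering $v_1,\dots,v_n$ of $V(H)$ and keep a list of sets $S_1,S_2,\dots$, all initially empty. Process the vertices in this order; to place $v_i$, let $j$ be the least index such that $S_j$ currently contains no neighbour of $v_i$ (appending a new empty set to the list if every existing $S_j$ already contains a neighbour of $v_i$), and add $v_i$ to $S_j$. Output the sets $S_1,\dots,S_t$ created, where $t$ is their number; each such set is nonempty, since a fresh set is opened only to receive the vertex currently being placed.

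\textbf{Step 2: ISC axioms.} I would verify the three defining properties directly. Each $S_j$ is an independent set of $H$, since by the placement rule a vertex is never added to a set that already contains one of its neighbours. The $S_j$ are pairwise disjoint, since each vertex is added exactly once. Their union is $V(H)$, since every vertex is processed and ends up in some set. Hence $\{S_1,\dots,S_t\}$ is an ISC of $H$.

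\textbf{Step 3: the size is at most $d$.} This is the one step with content. When the algorithm is about to place $v_i$, the sets that are forbidden for it are exactly those already containing a neighbour of $v_i$; since $v_i$ has at most $d$ neighbours in $H$, at most $d$ sets are forbidden, so $v_i$ is always placed among the first $d$ sets. Consequently the list never grows beyond index $d$, so $t\le d$ and the ISC has size at most $d$.

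\textbf{Step 4: running time $O(dn)$.} I maintain, for every vertex that has already been placed, a pointer to the set containing it. Then $v_i$ is handled in $O(d)$ time: walk over its at most $d$ neighbours, read the set index of each already-placed neighbour through its pointer, and among these at most $d$ indices find the least one that is absent --- which by Step 3 lies in $\{1,\dots,d\}$ --- in $O(d)$ time; finally set $v_i$'s pointer. Summing over all $n$ vertices gives total time $O(dn)$.

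\textbf{Where the work is.} The statement is elementary and there is no real obstacle; the two places to be slightly careful are Step 3, where one must see that a vertex's at-most-$d$ neighbours forbid at most $d$ of the previously opened sets so no $(d+1)$-st set is ever needed, and Step 4, where the per-vertex cost must be kept at $O(d)$ by consulting only the neighbours' recorded set indices rather than scanning all the sets opened so far.
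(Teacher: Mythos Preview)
Your proof is essentially sound and takes a genuinely different route from the paper's. The paper constructs the ISC by \emph{iterated MIS extraction}: find a maximal independent set $S_1$ in $H$, delete it, find an MIS $S_2$ in what remains, and so on; the key observation is that removing an MIS strictly lowers the maximum degree (every surviving vertex has a neighbour in the MIS, by maximality), so the process finishes in about $d$ rounds. You instead do \emph{first-fit greedy colouring} in a single left-to-right pass. Your version is arguably cleaner for this statement: no auxiliary MIS subroutine is invoked, the $O(d)$ per-vertex cost is explicit, and the running-time analysis in Step~4 is more transparent than the paper's one-line ``$O(dn)$''. The paper's approach, on the other hand, connects more naturally to the MIS machinery used throughout the rest of the section.

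One small slip in Step~3: from ``at most $d$ sets are forbidden'' you may only conclude that $v_i$ lands among $S_1,\dots,S_{d+1}$, not $S_1,\dots,S_d$; pigeonhole among $d+1$ candidates with $d$ forbidden guarantees a free slot only among the first $d+1$. (Take $H=K_2$ with $d=1$: two classes are required.) So your argument actually yields an ISC of size at most $d+1$. Read carefully, the paper's iterated-MIS count has the same off-by-one, and since the fact is only used downstream as an $O(d)$ bound inside $O(d\log d)$, the discrepancy is immaterial to either argument.
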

\begin{proof}
We repeatedly apply the greedy algorithm that finds an MIS in order to find an ISC. 
Recall that the greedy algorithm repeats the following until the graph has no unmarked vertex:
pick an unmarked vertex $v$, add it to the IS and mark off all the vertices in $N(v)$.
Clearly each IS found by the greedy algorithm has size at least $\frac{n}{d+1}$.
To partition the vertex set into an ISC,
we run this greedy algorithm to find an IS which we call $S_{1}$, and delete all 
the vertices in $S_{1}$ from the graph. 
Then we run the greedy algorithm on the new graph again to get $S_2$, and so on.
After running at most $d$ rounds (since each round reduces the maximum degree of the graph by at least one), 
we partition all the vertices into 
an ISC of size at most $d$ and the total running time is at most $O(dn)$.
\end{proof}

Our main result in this section is a local computation algorithm that computes an 
ISC of size $O(d\log d)$ for any graph of maximum degree $d$. On input a vertex $v$, 
our algorithm outputs the index $i$ of a vertex subset $S_{i}$ to which $v$ belongs, in an ISC of $H$. 
We will call $i$ the \emph{round number} of $v$ in the ISC.
By Fact~\ref{fact:ISC}, applying this algorithm to graph $G^{1,2}$
gives a local computation algorithm that computes a broadcast function for $G$.

\subsection{A local computation algorithm for ISC}
Our main result for computing an ISC is summarized in the following theorem.
\begin{theorem}\label{thm:ISC_main}
Let $H$ be an undirected graph on $n$ vertices with maximum degree $d$.
Then there is a $(\poly{d}\cdot \log{n}, O(n), 1/n)$-local 
computation algorithm 
which, on input a vertex $v$, computes the round number of $v$
in an ISC of size at most $O(d\log d)$.
Moreover, the algorithm will give a consistent ISC for every vertex in $H$.
\end{theorem}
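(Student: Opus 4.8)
The plan is to mimic the two-phase structure of the MIS algorithm (Theorem~\ref{thm:MIS_main}), but now iterated $O(\log d)$ times to build up the whole independent set cover rather than just a single MIS. Concretely, I would run the MIS local computation algorithm of Theorem~\ref{thm:MIS_main} on $H$ to obtain the first part $S_1$ of the ISC; then on the subgraph $H \setminus S_1$ to obtain $S_2$; and so on. Fact~\ref{fact:ISC_alg} tells us that after at most $d$ such peeling rounds every vertex is covered, but that would only give round number $O(d\cdot\poly d\cdot\log n)$ and, worse, an ISC of size $d$ rather than $O(d\log d)$. So the first key step is to argue that each application of \emph{Luby-style} peeling (Phase~1 run for $r = O(d\log d)$ rounds) already reduces the maximum degree of the ``uncovered'' graph by a constant factor with high probability, so that after $O(\log d)$ iterations the uncovered graph has maximum degree $O(1)$ (or is empty), at which point a final constant number of greedy ISC rounds finishes the job. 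This gives ISC size $O(\log d)\cdot O(d\log d) = O(d\log d)$ as claimed.

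The second step is to make ``each phase reduces the degree by a constant factor'' precise and consistent. Here I would reuse the analysis already developed in Section~\ref{Sec:maxis}: after one Phase~1 run, by Lemma~\ref{lemma:survived_size} the surviving (uncovered) vertices form connected components of size $O(\poly d\cdot\log n)$ with probability $\ge 1-1/n$; moreover a direct computation (as in Claim~\ref{claim:survived_prob}) shows that the probability a given vertex survives is small, and a vertex together with all its neighbors surviving is even smaller, so in the surviving subgraph the expected degree drops — I would formalize this as: with high probability the surviving subgraph has maximum degree at most $d/2$ (after choosing the round count $r$ and the selection probability appropriately; one may need $r = O(d\log d)$ for a constant-factor degree drop, or adjust to a fixed target). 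Iterating $\ell = O(\log d)$ times, the uncovered graph has maximum degree $O(1)$, and the union bound over the $\ell$ phases and over the $\le n$ vertices keeps the total failure probability at $O(\ell/n)$, which we can drive below $1/n$ by running each phase slightly longer (or by a union bound with a constant factor absorbed into the $\poly d$). On input $v$, the local algorithm simulates Phase~1 of iteration $1$ at $v$ and its radius-$r$ neighborhood; if $v$ gets selected it outputs round $1$; otherwise it moves to iteration $2$, needing to know which neighbors survived iteration $1$, each of which it computes by recursively simulating iteration~$1$ on that neighbor — this is exactly the ``delayed decision'' bookkeeping already used in $\mathbf{MIS}(v,i)$, and it blows up the running time by only $d^{O(d\log d)}$ per iteration, hence $d^{O(d\log d)}\cdot\log n$ total over $O(\log d)$ iterations, which is $\poly d\cdot\log n$ when $d$ is a constant, matching the theorem. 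After all $\ell$ iterations, if $v$ is still uncovered it lies in a component of size $O(\log n)$ in the degree-$O(1)$ uncovered graph, and we run the greedy ISC of Fact~\ref{fact:ISC_alg} on that component to assign $v$ one of the final $O(1)$ round numbers.

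For consistency: as in Section~\ref{Sec:maxis} we store all random coins generated per vertex per round (generated lazily on first request), so that every query to any $u\in V$ recomputes the same selections; since the ISC is a deterministic function of these coins and of the (canonical, e.g.\ vertex-ID-ordered) greedy tie-breaking in the final cleanup, all answers agree with one global broadcast function $F_r$. The space bound $O(n)$ comes, as before, from storing the coins and the $O(\log n)$-size component explored in the cleanup phase. Combining with Fact~\ref{fact:ISC}, applying this algorithm to $G^{1,2}$ yields the claimed local broadcast-function algorithm.

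\textbf{Main obstacle.} The hard part is the degree-reduction claim: showing that one Luby phase of $O(d\log d)$ rounds shrinks the maximum degree of the uncovered subgraph by a constant factor \emph{with high probability}, not just in expectation. A surviving vertex keeping many surviving neighbors is not obviously a low-probability event for adversarial local structure, so I expect to need a Beck-style $3$-tree / LLL argument (as in Lemma~\ref{lemma:survived_size}) applied to the event ``$v$ and $\ge d/2$ of its neighbors all survive,'' controlling the limited dependence among these survival events and summing over the $\le n$ vertices. Getting the quantitative trade-off between $r$, the target degree drop, and the failure probability to close cleanly — so that $O(\log d)$ iterations suffice and the total error stays below $1/n$ — is the delicate point; everything else is a routine adaptation of the MIS construction already in the excerpt.
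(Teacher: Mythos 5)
Your proposal takes a genuinely different route from the paper's, and that route breaks down. The paper's Phase~1 for ISC is \emph{not} an iterated MIS peel: it is a single pass of $r = O(d\log d)$ Luby-style rounds in which a vertex that wins round $i$ (i.e.\ is the unique chooser in its closed neighborhood) is assigned directly to $S_i$, but --- crucially --- \emph{no neighbors are ever deleted}: every not-yet-assigned vertex flips a fresh coin in every round. Thus each round $i$ produces its own independent set, and after $r$ rounds the surviving (never-assigned) vertices form components of size $O(\poly{d}\log n)$ w.h.p.\ by a Beck-style $3$-tree argument; the greedy of Fact~\ref{fact:ISC_alg} then splits each such component into at most $d$ further classes, for a total of $r + d = O(d\log d)$. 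Because there is no deletion, a vertex's status in round $i$ depends only on the coin flips of $v$ and $N(v)$ in rounds $1,\ldots,i$, not on its neighbors' states in round $i-1$, so there is no recursive blow-up and Phase~1 costs only $\poly d$ per vertex --- this is what makes the $\poly d\cdot\log n$ bound achievable.

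The gap in your approach is exactly the step you flag as the ``main obstacle,'' and it cannot be repaired. The claim that one MIS peel (or Luby phase) reduces the maximum degree of the uncovered graph by a constant factor is false: on a $(d+1)$-clique an MIS is a single vertex and the degree drops by one, not by a constant factor, so $O(\log d)$ peels neither empty the graph nor drive the degree to $O(1)$. More basically, if $O(\log d)$ full MIS peels plus a constant cleanup sufficed, the resulting ISC would have size $O(\log d)$, contradicting the trivial lower bound of $d+1$ for the clique; your accounting $O(\log d)\cdot O(d\log d) = O(d\log d)$ is both arithmetically off (it is $O(d\log^2 d)$) and conceptually off, since each MIS run produces one class, not $O(d\log d)$. (Your aside that $d$ peels are ``worse'' because they give ISC size $d$ rather than $O(d\log d)$ is also inverted --- $d$ is smaller; the reason not to peel $d$ times is the running time, since $\mathbf{MIS}$'s recursion on prior-round neighbor states costs $d^{O(d\log d)}$ per iteration and nesting $O(\log d)$ such iterations compounds the exponent, far exceeding $\poly d\cdot\log n$.) The fix is precisely the paper's observation: drop the deletions, let every vertex retry every round, and read off one independent set per round.
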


On input a vertex $v$, 
our algorithm computes the round number of $v$ in two phases. 
In Phase 1 we simulate Luby's algorithm for MIS~\cite{Lub86}
for $O(d\log d)$ rounds.
At each round, $v$ tries to put itself in the independent set generated in that round.
That is, $v$ chooses itself with probability $1/2d$ and if none of its 
neighbors choose themselves, then $v$ is selected in that round and we output that round number for $v$. 
As we show shortly, after Phase 1,
most vertices will be assigned a round number.   We say $v$ {\em survives}
if it is not assigned a round number. We consider
the connected component containing $v$ after one deletes all vertices
that do not survive from the graph.
Following an argument similar to that of Beck~\cite{Bec91}, almost surely,
all such connected components of surviving vertices after Phase 1
have size at most $O(\log n)$. This enables us, in Phase 2, to
perform the greedy algorithm on $v$'s connected component
to deterministically compute the round number of $v$ in time
$O(\log n)$.

\subsubsection{Phase 1 algorithm}
Phase 1 of our local computation algorithm for computing an ISC is shown in Figure~\ref{starf1}.
\footnote{In 2(b), we flip random coins for $u$ even if $u$ is selected in
a previous round.  We do this for the 
technical reason that we want to 
rid the dependency of $v$ on nodes that are not neighbors
to simplify our analysis.
Thus our analysis is overly pessimistic since
if selected neighbors stop choosing themselves,
it only increases the chance of $v$ being selected.}

\begin{figure*}
\begin{center}
\fbox{
\begin{minipage}{5in}
\small
\begin{tabbing}
\textsc{Independent Set Cover: Phase 1} \\
Input: a graph $H$ and a vertex $v \in V$\\
Out\=put\=:   the \=rou\=nd \=num\=ber of $v$ in the ISC or ``$\perp$''\\
1. \> Initialize all vertices in $N^{+}(v)$ to state ``$\perp$'' \\
2. \> For $i=1$ to $r=20d\log{d}$ \\
   \>\> (a) If $v$ is labeled "$\perp$"\\
   \>\>\>  $v$ chooses itself independently with probability $\frac{1}{2d}$\\
   \>\> (b) If $v$ chooses itself\\
   \>\>\>  (i) For every $u \in N(v)$\\
   \>\>\>\>  (even if $u$ is labeled ``selected in round $j$''\\
   \>\>\>\>    for some $j<i$, we still flip random coins for it)\\
   \>\>\>\>  $u$ chooses itself independently with probability $\frac{1}{2d}$\\
   \>\>\>  (ii) If $v$ has a chosen neighbor, \\
   \>\>\>\>   $v$ unchooses itself\\
   \>\>\>  (iii) Else\\
   \>\>\>\>  $v$ is labeled ``selected in round $i$''\\
   \>\>\>\>  return $i$ \\
3. \>    return ``$\perp$''
\end{tabbing}
\end{minipage}
}
\end{center}
\caption{Algorithm for finding an Independent Set Cover: Phase 1.}
\label{starf1}
\end{figure*}

For every $v\in V$, let $A_{v}$ be the event that vertex $v$ returns ``$\perp$'', i.e.  
$v$ is not selected after $r$ rounds. We call such a $v$ 
a \emph{surviving} vertex.   After deleting all $v$ that do not
survive from the graph, we are interested in bounding the size
of the largest remaining connected component.
Clearly event $A_v$ depends on the random coin tosses of $v$ and $v$'s neighboring 
vertices in all the $r$ rounds. A graph $H$ on the vertices $V(H)$ (the indices for the $A_{v}$)
is called a \emph{dependency graph} for $\{A_{v}\}_{v\in V(H)}$ if for all $v$ the
event $A_{v}$ is mutually independent of all $A_{u}$ with $(u,v)\notin H$.

The following two claims are identical to 
Claim~\ref{claim:survived_degree} and Claim~\ref{claim:survived_prob} 
in Section~\ref{Sec:maxis} respectively, we therefore omit the proofs. 
\begin{claim}\label{claim:survived_degree_radio}
The dependency graph $H$ has maximum degree $d^2$.
\end{claim}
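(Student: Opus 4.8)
The plan is to mirror the proof of Claim~\ref{claim:survived_degree} from Section~\ref{Sec:maxis}. First I would pin down exactly which random coins the event $A_v$ is a function of. Inspecting the Phase~1 algorithm of Figure~\ref{starf1}, whether $v$ ever becomes ``selected'' in one of the $r=20d\log d$ rounds is decided solely by (i) the coins $v$ itself flips in steps 2(a)/2(b) of each round, and (ii) the coins flipped in step 2(b)(i) by the vertices $u\in N(v)$. Crucially --- and this is precisely why the algorithm is set up the way it is (see the footnote following Figure~\ref{starf1}) --- we flip coins for a neighbor $u$ in every round regardless of whether $u$ has already been selected in a previous round, so $A_v$ never comes to depend on the states, and hence on the coins, of vertices outside $N^{+}(v)$. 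Thus $A_v$ is a deterministic function of the coin sequences associated with the vertices of $N^{+}(v)$ only.

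Given this localization, the dependency structure is immediate. Assign to each vertex its own independent block of coins. Whenever $N^{+}(u)\cap N^{+}(v)=\emptyset$ --- in particular whenever $d_G(u,v)\geq 3$ --- the events $A_u$ and $A_v$ are functions of disjoint, hence independent, collections of coins; more generally $A_v$ is mutually independent of the whole family $\{A_u : d_G(u,v)\geq 3\}$, since that family is determined by the coins of $\bigcup_{d_G(u,v)\geq 3} N^{+}(u)$, which is disjoint from $N^{+}(v)$. Therefore we may take as the dependency graph $H$ the graph on $V(H)$ with an edge between $u$ and $v$ exactly when $1\leq d_G(u,v)\leq 2$.

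It then only remains to bound the degree of $H$. A vertex $v$ has at most $d$ vertices at distance exactly $1$ and, since each of those has degree at most $d$, at most $d(d-1)$ vertices at distance exactly $2$; summing gives at most $d+d(d-1)=d^{2}$ neighbors of $v$ in $H$, which is the claimed bound. The only step that genuinely requires care is the first one --- verifying that, thanks to the ``flip coins for already-selected neighbors'' convention, $A_v$ really does localize to $N^{+}(v)$ and does not leak into a larger radius; once that is in hand, the rest is the same disjoint-support argument and vertex count as in Claim~\ref{claim:survived_degree}.
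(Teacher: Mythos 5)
Your proof is correct and follows essentially the same argument the paper uses for Claim~\ref{claim:survived_degree} (which the paper explicitly says carries over to this claim): localize $A_v$ to the coin tosses of $N^{+}(v)$, note that vertices at distance at least~$3$ therefore yield events with disjoint randomness, and count at most $d+d(d-1)=d^{2}$ vertices at distance $1$ or~$2$. Your added observation that the ``flip coins for already-selected neighbors'' convention is exactly what makes $A_v$ itself localize (so no surrogate process $\mathbf{MIS}_B$ is needed here, unlike in Section~\ref{Sec:maxis}) is the right reading of the footnote and is a good explicit justification of a step the paper leaves implicit.
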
 

\begin{claim}\label{claim:survived_prob_radio}
For every $v\in V$, the probability that $A_{v}$ occurs is at most $1/8d^3$.
\end{claim}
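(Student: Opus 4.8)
The plan is to reuse, essentially verbatim, the argument for Claim~\ref{claim:survived_prob}, adapting only the bookkeeping to the ISC Phase~1 routine of Figure~\ref{starf1}. The key structural observation is that, by construction, $A_v$ (the event that $v$ is never selected during the $r$ rounds) depends only on the coins flipped by $v$ and by the vertices of $N(v)$: in step~2(a) vertex $v$ flips a fresh coin in round $i$ precisely when it is still labeled ``$\perp$'', and in step~2(b)(i) each $u\in N(v)$ flips a fresh coin in round $i$ whenever $v$ chooses itself --- and, as emphasized in the footnote to the figure, $u$ does so even if it was already selected in an earlier round. Consequently the round-$i$ coins of $N^{+}(v)$ are mutually independent of everything that happened in rounds $1,\dots,i-1$.

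First I would bound, for each round $i$, the conditional probability that $v$ becomes selected in round $i$ given that $v$ has not been selected in any of rounds $1,\dots,i-1$. Under this conditioning $v$ is still labeled ``$\perp$'' at the start of round $i$, so it chooses itself with probability $\tfrac{1}{2d}$; independently, each $u\in N(v)$ fails to choose itself with probability $1-\tfrac{1}{2d}$. Hence the conditional selection probability is $\tfrac{1}{2d}\,(1-\tfrac{1}{2d})^{d(v)}\ge \tfrac{1}{2d}\,(1-\tfrac{1}{2d})^{d}\ge \tfrac{1}{2d}\cdot\tfrac12=\tfrac{1}{4d}$, where we used $(1-\tfrac{1}{2d})^{d}\ge \tfrac12$.

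Then I would multiply these bounds over the $r=20d\log d$ rounds: since $A_v$ is the event that $v$ survives every round, the chain rule gives $\Pr[A_v]\le (1-\tfrac{1}{4d})^{20d\log d}\le e^{-5\log d}\le \tfrac{1}{8d^3}$, the last inequality holding for all $d\ge 2$, which is exactly the claim.

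I do not expect any real obstacle here; the only point that needs a sentence of care is the independence assertion in the first paragraph --- namely that nothing in the history of rounds $1,\dots,i-1$ biases the round-$i$ coins downward --- and this is precisely the reason the algorithm forces already-selected neighbors to keep flipping coins in step~2(b)(i). With that in hand the estimate is word-for-word the one in Section~\ref{Sec:maxis}, which is why the authors omit it.
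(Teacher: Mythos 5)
Your proof is correct and matches the paper's approach exactly: the paper notes that the proof is identical to that of Claim~\ref{claim:survived_prob} and omits it, and what you wrote is precisely that argument — the per-round conditional selection probability bound of $\frac{1}{2d}(1-\frac{1}{2d})^{d(v)} \ge \frac{1}{4d}$, followed by the chain rule over the $r=20d\log d$ rounds. Your added observation about why step~2(b)(i) forces already-selected neighbors to keep flipping (so that $A_v$ depends only on coins of $N^+(v)$ and the conditional round-$i$ probability is clean) is the same point the paper makes in the footnote to Figure~\ref{starf1}.
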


The following observation is crucial in our local computation algorithm.
\begin{lemma}\label{lemma:survived_size_radio}
After Phase 1, with probability at least $1-1/n$, all connected components 
of the surviving vertices are of size at most $O(\poly{d} \cdot \log n)$. 
\end{lemma}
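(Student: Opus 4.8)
The plan is to mimic the proof of Lemma~\ref{lemma:survived_size} essentially verbatim, since the setup for the radio-network ISC algorithm is structurally identical to the MIS case. First I would note that the two ingredients we need are already in hand: Claim~\ref{claim:survived_degree_radio} tells us the dependency graph $H$ for the events $\{A_v\}$ has maximum degree $d^2$, and Claim~\ref{claim:survived_prob_radio} tells us $\Pr[A_v] \le 1/8d^3$ for every $v$. One difference from Section~\ref{Sec:maxis} is that here the events $A_v$ are defined directly from the Phase~1 ISC algorithm rather than being sandwiched between $A_v$ and an auxiliary $B_v$; the footnote about flipping coins for already-selected neighbors is precisely what makes $A_v$ depend only on the radius-$r$ neighborhood of $v$, so the dependency-graph bound is genuine and no $\mathbf{MIS}_B$-style surrogate is needed.

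The core counting argument runs as follows. I would work in the ``distance-$3$'' graph $H^3$ of the dependency graph $H$, and bound the number of $3$-trees of size $w$: there are at most $4^w$ non-isomorphic trees on $w$ vertices (citing \cite{Lov93}), at most $n$ choices for the root, and at most $D < d^3$ choices for each subsequent vertex since $H^3$ has maximum degree $D = d^2(d^2-1) < d^4$ — here I should be slightly careful that the degree of $H^3$ is at most (degree of $H$)$\cdot$(degree of $H$ minus one)$^2$ with the degree of $H$ being $d^2$, so the per-vertex branching factor is some $\poly{d}$; the constant in the final $\poly{d}$ will differ from the MIS case but the structure is the same. This gives at most $n(4d^{O(1)})^w$ $3$-trees of size $w$. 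Since the vertices of a $3$-tree are pairwise at distance $\ge 3$ in $H$, the corresponding events $\{A_v\}$ are mutually independent, so the probability that all $w$ of them survive is at most $(1/8d^3)^w$. Choosing the relevant constants so that $4d^{O(1)} \cdot (1/8d^3) \le 1/2$ (which holds for all $d$, possibly after enlarging the exponent $20$ in $r = 20 d\log d$ inside Claim~\ref{claim:survived_prob_radio}'s bound — one can always take the survival probability to be $\le d^{-c}$ for a large enough constant $c$), the expected number of $3$-trees of size $w = c_1 \log n$ is at most $n 2^{-w} \le 1/n$.

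Then I would finish by Markov's inequality: with probability at least $1 - 1/n$ there is no $3$-tree of size larger than $c_1 \log n$ in $H^3$. Finally, invoking the ``$3$-tree lemma'' (the variant of Beck's $4$-tree lemma from \cite{Bec91}, as already used in Lemma~\ref{lemma:survived_size}), any connected component of size $s$ in $H$ — equivalently, any connected set of surviving vertices — contains a $3$-tree of size at least $s/\poly{d}$. Hence with probability at least $1-1/n$ every connected component of surviving vertices after Phase~1 has size $O(\poly{d}\cdot\log n)$, as claimed. I do not expect any real obstacle here; the only points requiring a little care are tracking the exact polynomial-in-$d$ factors (degree of $H^3$, and the size blow-up in the $3$-tree lemma) and making sure the product of the tree-counting factor and the survival probability is genuinely bounded by a constant less than $1$, which one guarantees by taking the number of simulated rounds $r$ to be a large enough constant multiple of $d\log d$.
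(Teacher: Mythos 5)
Your proposal is correct and takes essentially the same route as the paper, which omits the proof here and simply points to Lemma~\ref{lemma:survived_size}, noting (as you do) that the $\mathbf{MIS}_{B}$ surrogate is unnecessary because the footnoted convention of always flipping coins for neighbors makes each $A_v$ depend only on the radius-$1$ coin tosses, so the dependency-graph bound of Claim~\ref{claim:survived_degree_radio} is already exact. Your worry about the branching factor is slightly over-cautious: the $3$-tree definition in the Preliminaries uses distances in the original graph, so the relevant quantity is the maximum degree of the distance-exactly-$3$ graph of $G$, namely $d(d-1)^2<d^3$ (the paper's ``$H^3$'' in the proof of Lemma~\ref{lemma:survived_size} is a mild abuse of notation), whence $(4d^3)\cdot(1/8d^3)=1/2$ already gives the needed $n2^{-w}$ bound with $r=20d\log d$ as stated---though your fallback of enlarging the constant in $r$ is a valid safety net regardless.
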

\begin{proof}
The proof is almost identical to that of Lemma~\ref{lemma:survived_size} but
is only simpler: we can directly upper bound the probability
\[
\Pr[\text{all vertices in $W$ are surviving vertices}]
=\Pr[\cap_{v\in W} A_{v}] 
\]
by way of Beck~\cite{Bec91} without resorting to any other random process. 
We omit the proof.
\end{proof}

\subsubsection{Phase 2 algorithm}
If $v$ is a surviving vertex after Phase 1, 
we perform Phase 2 of the algorithm. 
In this phase, we first explore the connected component, $C(v)$,
that the surviving
vertex $v$ lies in.
If the size of $C(v)$ is larger than $c_{2}\log n$ for some constant $c_{2}(d)$ depending only on $d$, 
we abort and output ``Fail''.  Otherwise,
we perform the simple greedy algorithm described in Fact~\ref{fact:ISC_alg} to 
partition $C(v)$ into at most $d$ subsets deterministically. The running time for Phase 2
is at most $\poly{d}\cdot \log n$. Since any independent set of a connected component
can be combined with independent sets of other connected components to form an IS for 
the surviving vertices, we conclude that the total size of ISC we find is 
$O(d\log d)+d=O(d\log d)$.

\subsection{Discussions}
Now a simple application of Theorem~\ref{thm:ISC_main} to $G^{1,2}$ 
gives a local computation algorithm for the broadcast function.

\begin{theorem}\label{sflog}
Given a graph $G=(V,E)$ with $n$ vertices and maximum degree $\Delta$ and a vertex $v \in V$, 
there exists a 
$(\poly{\Delta}\cdot \log n, O(n), 1/n)$-local computation algorithm
that computes a broadcast function with at most $O(\Delta^{2}\log \Delta)$ rounds. 
Furthermore, the broadcast function it outputs is consistent 
for all queries to the vertices of the graph.
\end{theorem}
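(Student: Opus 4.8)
The plan is to apply Theorem~\ref{thm:ISC_main} verbatim to the auxiliary graph $H = G^{1,2}$, and then translate the guarantees back to the original graph $G$ via Fact~\ref{fact:ISC}. First I would observe that if $G$ has maximum degree $\Delta$, then $H = G^{1,2}$ has maximum degree $d \leq \Delta^2$, since the number of vertices at distance one or two from any vertex is at most $\Delta + \Delta(\Delta-1) = \Delta^2$. Theorem~\ref{thm:ISC_main} applied to $H$ therefore yields a $(\poly{d}\cdot\log n, O(n), 1/n)$-local computation algorithm that, on input a vertex $v$, outputs the round number of $v$ in an ISC of $H$ of size at most $O(d\log d) = O(\Delta^2 \log \Delta)$. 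Substituting $d \leq \Delta^2$ into the time bound gives $\poly{\Delta}\cdot\log n$, which is the claimed running time.

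Next I would address the fact that the local computation algorithm of Theorem~\ref{thm:ISC_main} expects query access to its input graph $H$, whereas we are only given query access to $G$ in adjacency-list form. Here I would invoke the locality-preserving reduction framework: computing the adjacency list of a vertex $v$ in $G^{1,2}$ requires only exploring the radius-$2$ neighborhood of $v$ in $G$, which takes at most $O(\Delta^2)$ queries to $G$ and hence does not change the asymptotic running time (it contributes another $\poly{\Delta}$ factor). So every word of the description of $H$ is computable from $O(\poly\Delta)$ words of $G$, and composing this with the ISC algorithm preserves the $(\poly{\Delta}\cdot\log n, O(n), 1/n)$ bound, by the composition theorem for locality-preserving reductions.

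Finally, to establish correctness and consistency of the broadcast function, I would appeal to Fact~\ref{fact:ISC}: if $\{S_1,\ldots,S_t\}$ is an ISC for $G^{1,2}$, then $F_r(v) = i$ iff $v \in S_i$ is a valid broadcast function for $G$. Since Theorem~\ref{thm:ISC_main} guarantees that the algorithm produces answers consistent with a single fixed ISC of $H$ across all queried vertices, the induced broadcast function is globally consistent. The number of rounds equals the size of the ISC, namely $O(\Delta^2\log\Delta)$, matching the statement.

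The only mild obstacle is bookkeeping the two layers of reduction (the graph transformation $G \mapsto G^{1,2}$ and then the ISC-to-broadcast-function correspondence) while making sure the space stays $O(n)$ and the error probability stays $1/n$ — but since the transformation $G \mapsto G^{1,2}$ is deterministic and zero-error, and the space for recomputing local neighborhoods of $G^{1,2}$ can be reused, neither the error nor the space bound degrades. Everything else is a direct substitution into Theorem~\ref{thm:ISC_main}.
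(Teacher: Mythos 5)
Your proposal is correct and follows the paper's argument essentially verbatim: the paper likewise obtains Theorem~\ref{sflog} by applying Theorem~\ref{thm:ISC_main} to $G^{1,2}$ (with $d \leq \Delta^2$) and invoking Fact~\ref{fact:ISC} to translate the ISC into a broadcast function. Your extra remarks on simulating query access to $G^{1,2}$ via the radius-$2$ neighborhood in $G$ and on reusing space are implicit in the paper but are accurate and helpful bookkeeping.
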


We note our round number bound is quadratically larger than that of Alon's parallel algorithm~\cite{Alo91}.  
We do not know how to turn his algorithm into a local computation algorithm.

\section{Hypergraph two-coloring}\label{Sec:hypergraph}

A \emph{hypergraph} $H$ is a pair $H = (V,E)$ where $V$ is a finite set whose elements are
called \emph{nodes} or \emph{vertices}, and $E$ is a family of non-empty subsets of $V$,
called \emph{hyperedges}. A hypergraph is called \emph{$k$-uniform} if each of its
hyperedges contains precisely $k$ vertices.
A \emph{two-coloring} of a hypergraph $H$ is a mapping $\mathbf{c}: V\to \{\text{red, blue}\}$
such that no hyperedge in $E$ is monochromatic.
If such a coloring exists, then we say $H$ is \emph{two-colorable}.
We assume that each
hyperedge in $H$ intersects at most $d$ other hyperedges.
Let $N$ be the number of hyperedges in $H$. Here we think of $k$ and $d$ as fixed constants
and all asymptotic forms are with respect to $N$.
By the Lov{\'{a}}sz Local Lemma,
when $e(d+1) \leq 2^{k-1}$, the hypergraph $H$ is
two-colorable.

Let $m$ be the total number of vertices in $H$. Note that $m\leq kN$, so $m=O(N)$.
For any vertex $x\in V$, we use $\mathcal{E}(x)$ to denote the set of hyperedges $x$ belongs to.
For convenience, for any hypergraph $H = (V,E)$,
we define an $m$-by-$N$ \emph{vertex-hyperedge incidence matrix} $\mathcal{M}$
such that, for any vertex $x$ and hyperedge $e$, $\mathcal{M}_{x,e}=1$ if $e\in \mathcal{E}(x)$ and 
$\mathcal{M}_{x,e}=0$ otherwise.
A natural representation of the input hypergraph $H$ is this vertex-hyperedge incidence matrix $\mathcal{M}$.
Moreover, since we assume both $k$ and $d$ are constants, 
the incidence matrix $\mathcal{M}$ is necessarily very sparse. 
Therefore, we further assume that the matrix $\mathcal{M}$ is implemented via
linked lists for each row (that is, vertex $x$) and each column (that is, hyperedge $e$). 

Let $G$ be the \emph{dependency graph} of the hyperedges in $H$. 
That is, the vertices of the undirected graph $G$
are the $N$ hyperedges of $H$ and a hyperedge $E_{i}$ is connected to
another hyperedge $E_{j}$ in $G$ if $E_{i}\cap E_{j} \neq \emptyset$.
It is easy to see that if the input hypergraph is given in the 
above described representation, then we can find all the neighbors of any hyperedge $E_{i}$ 
in the dependency graph $G$ (there are at most $d$ of them) in $O(\log N)$ time.

\subsection{Our main result}
A natural question to ask is:
Given a two-colorable hypergraph $H$ and a vertex $v\in V(H)$,
can we quickly compute the coloring of $v$? 
Here we would like the coloring to be \emph{consistent}, 
meaning all the answers we provide must come from the
{\em same} valid two-coloring.
Our main result in this section is,
given a two-colorable hypergraph $H$ whose two-coloring scheme is guaranteed by
the Lov{\'{a}}sz Local Lemma (with slightly weaker parameters),
we give a local computation algorithm which answers queries of the coloring
of any single vertex in $\polylog{N}$ time, 
where $N$ is the number of the hyperedges in $H$.
The coloring returned by our oracle 
will agree with some two-coloring of the hypergraph with probability at least $1-1/N$.

\begin{theorem}\label{thm:hypergraph_main}
Let $d$ and $k$ be such that there exist three positive integers $k_{1}, k_{2}$ and $k_{3}$ such that
the followings hold:
\begin{align*}
k_{1}+k_{2}+{k_3} &=k, \\
16d(d-1)^{3}(d+1) &< 2^{k_{1}},\\
16d(d-1)^{3}(d+1) &< 2^{k_{2}},\\
2e(d+1) &< 2^{k_{3}}.
\end{align*}
Then there exists a  
$(\polylog{N}, O(N), 1/N)$-local computation algorithm 
which, given a hypergraph $H$ and any sequence of
queries to the colors of vertices $(x_1, x_2, \ldots, x_s)$, 
returns a consistent coloring for all $x_i$'s which 
agrees with some $2$-coloring of $H$.  
\end{theorem}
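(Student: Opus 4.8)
The plan is to convert Alon's parallel algorithm~\cite{Alo91} for LLL-based hypergraph two-coloring into a local computation algorithm, re-using and then \emph{iterating} the Beck-style ``shrinking components'' machinery of Section~\ref{Sec:maxis}. The three integers $k_{1},k_{2},k_{3}$ with $k_{1}+k_{2}+k_{3}=k$ should be thought of as three disjoint blocks of the $k$ vertices of every hyperedge: block~$1$ is the safety margin spent in Phase~1, block~$2$ the margin spent in Phase~2, and block~$3$ the margin that makes the final brute-force step legitimate by the Local Lemma (Lemma~\ref{lemma:LLL}).

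\emph{Phase~1 (global random coloring).} Lazily assign every vertex a uniform random color, storing the bits for consistency. Call a hyperedge $e$ \emph{surviving} if its block-$1$ coloring has not already made $e$ ``robustly non-monochromatic'' (so $\Pr[e\text{ surviving}]\le 2^{1-k_{1}}$), and call a vertex \emph{free} if every hyperedge containing it is surviving; non-free vertices keep their Phase-$1$ color forever, which guarantees that every non-surviving hyperedge is already satisfied and is never touched again. Running exactly the $3$-tree counting argument of Lemma~\ref{lemma:survived_size} on the dependency graph $G$ of $H$, the hypothesis $16d(d-1)^{3}(d+1)<2^{k_{1}}$ is calibrated so that (number of $3$-tree shapes, $4^{w}$) $\times$ (degree of the distance-$3$ dependency graph) $\times$ (survival probability $2^{1-k_{1}}$) is $<1$; hence with probability $\ge 1-1/N$ every connected component of surviving hyperedges in $G$ has size $O(\log N)$. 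On a query to $x$, check $x$'s incident hyperedges (constant radius, $O(\log N)$ per neighbor lookup via the linked-list representation): most vertices are non-free and already resolved here; otherwise explore the surviving component $C$ of $x$, outputting ``Fail'' if $|C|>c_{1}\log N$.

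\emph{Phase~2 (recursion inside $C$).} Since $|C|=O(\log N)$, a direct brute force over colorings of $C$ would cost $2^{\Theta(\log N)}=\mathrm{poly}(N)$, which is too slow; instead we re-run the same idea one level down on the sub-hypergraph induced by $C$ together with its frozen boundary. Recolor the free vertices of $C$ with fresh stored randomness, using block~$2$ as the margin, define the ``$2$-surviving'' hyperedges analogously ($\Pr\le 2^{1-k_{2}}$), and apply the $3$-tree argument again \emph{inside $C$}; the hypothesis $16d(d-1)^{3}(d+1)<2^{k_{2}}$ makes the relevant factor $<1$, so one trial produces $2$-surviving sub-components all of size $O(\log|C|)=O(\log\log N)$ with probability $\ge 1-1/\mathrm{polylog}(N)$. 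To push the failure probability for this component down to $1/N^{2}$ (so a union bound over the $\le N$ possible components costs only $1/N$), repeat the trial with disjoint fresh random bits and use the first successful trial; $O(\log N)$ trials suffice, each costing $\mathrm{polylog}(N)$, and ``first successful trial'' is a deterministic function of the tape, so consistency is preserved.

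\emph{Phase~3 (brute force via the Local Lemma), consistency, and complexity.} Each remaining $2$-surviving sub-component $C'$ has $O(\log\log N)$ hyperedges, hence $O(\log\log N)$ free vertices; exhaustively search all $2^{O(\log\log N)}=\mathrm{polylog}(N)$ colorings of those vertices for one that, together with the frozen vertices, makes every hyperedge meeting $C'$ bichromatic. Such a coloring exists by Lemma~\ref{lemma:LLL}: the two earlier phases freeze at most $k_{1}+k_{2}$ of each hyperedge's vertices, so each hyperedge retains $\ge k_{3}$ free vertices, and $2e(d+1)<2^{k_{3}}$ is exactly the symmetric LLL condition for a $k_{3}$-uniform instance of maximum degree $d$. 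The color reported for $x$ is then a deterministic function of the random tape; distinct surviving components (and their free-vertex sets) are pairwise disjoint, so no two queries can conflict; the per-query time is $\mathrm{polylog}(N)$; the space is $O(N)$, dominated by stored random bits; and the total failure probability is $\le 1/N$. \textbf{Main obstacle:} the delicate part is not the probabilistic shrinking (a near-verbatim iteration of Section~\ref{Sec:maxis} and Lemma~\ref{lemma:survived_size}) but the interface bookkeeping across the three phases --- defining ``robustly non-monochromatic'' and ``free'' so that (i) frozen hyperedges can never be violated by a later recoloring, (ii) the survival probabilities stay $\le 2^{1-k_{i}}$ even after conditioning on the freezing pattern, and (iii) every hyperedge still has $\ge k_{3}$ free vertices when Phase~3 begins. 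This three-way accounting is precisely where the split $k=k_{1}+k_{2}+k_{3}$ and the three inequalities are used, mirroring Alon's parallel algorithm~\cite{Alo91}; a secondary technical point is the Phase-$2$ amplification that beats the $\Theta(\log N)$ ``Beck barrier'' inside a single component while keeping the union bound over components at $1/N$.
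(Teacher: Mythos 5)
Your proposal follows essentially the same route as the paper: lazily simulate Alon's three-phase parallel algorithm, using $k_1$ as the monochromatic threshold that makes a hyperedge dangerous in Phase~1, $k_1+k_2$ in Phase~2, and $k_3$ leftover free vertices to invoke the symmetric LLL for the brute-force Phase~3, with a Beck-style $3$-tree/component-shrinking argument bounding surviving components by $O(\log N)$ after Phase~1 and $O(\log\log N)$ after Phase~2, and with Phase~2 repeated a polylogarithmic number of times on fresh stored randomness to drive the per-component failure probability below $1/N^2$. The only cosmetic differences are that the paper cites Alon's component-size lemma as a black box rather than re-running the $3$-tree count explicitly, and uses $O(\log N/\log\log N)$ Phase-$2$ repetitions where you use $O(\log N)$; neither affects correctness or the $\polylog{N}$ bound.
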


\subsection{Overview of the coloring algorithm}\label{Sec:hypergraph_overview}
Our local computation algorithm imitates the parallel coloring algorithm of Alon~\cite{Alo91}.
Recall that Alon's algorithm runs in three phases. 
In the first phase, we randomly color each vertex in
the hypergraph following some arbitrary ordering of the vertices. 
If some hyperedge has $k_{1}$ vertices in one color and 
no vertices in the other color,
we call it a \emph{dangerous} edge and 
mark all the remaining vertices in that hyperedge as \emph{troubled}.
These \emph{troubled} vertices will not be colored in the first phase.
If the queried vertex becomes a \emph{troubled} vertex from the coloring process
of some previously queried vertex,
then we run the Phase $2$ coloring algorithm.   
There we first delete all hyperedges which have been assigned both colors and call the remaining
hyperedges \emph{surviving} edges. 
Then we repeat the same process again for the \emph{surviving} hyperedges, but this time 
a hyperedge becomes dangerous if $k_{1}+k_{2}$ vertices are colored the same color and
no vertices are colored by the other color. 
Finally, in the third phase, we do a brute-force search for a coloring
in each of the connected components of the surviving vertices 
as they are of size $O(\log\log N)$ almost surely.

A useful observation is, in the first phase of Alon's algorithm, 
we can color the vertices in \emph{arbitrary} order. 
In particular, this order can be taken to be the 
order that queries to the local computation algorithm are made in. 
If the coloring of a vertex $x$ can not be determined in the first phase, 
then we explore the dependency graph around the hyperedges containing $x$ 
and find the connected component of the \emph{surviving} hyperedges to perform the second phase coloring. 
To ensure that \emph{all} the connected components of \emph{surviving hyperedges} 
resulting from the second phase coloring are of small sizes,
we repeat the second phase colorings independently many times until
the connected components sizes are small enough.
If that still can not decide the coloring of $x$, then we
run the third (and final) phase of coloring, in which we exhaustively search for a two-coloring
for vertices in some very small (i.e., of size at most $O(\log\log N)$) 
connected component in $G$ as guaranteed by our second phase coloring.
Following Alon's analysis, we show that with probability at least $1-1/N$,
the total running time of all these three phases for any vertex in $H$ is $\polylog{N}$.

During the execution of the algorithm, each hyperedge will be in
either \emph{initial}, \emph{safe}, \emph{unsafe-$1$}, \emph{unsafe-$2$},
\emph{dangerous-$1$} or \emph{dangerous-$2$} state.
Vertices will be in either \emph{uncolored}, \emph{red}, \emph{blue}, 
\emph{trouble-$1$} or \emph{trouble-$2$} state.
The meanings of all these states should be clear
from their names.
Initially every hyperedge is in \emph{initial} state and every vertex is in \emph{uncolored} state.

\subsection{Phase 1 coloring}\label{Sec:hypergraph_P1}

\begin{figure*}
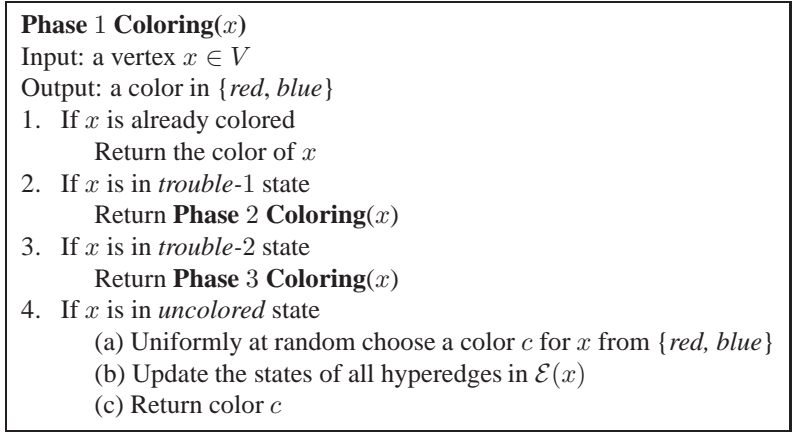

\begin{center}
\fbox{
\begin{minipage}{5in}
\small
\begin{tabbing}
\textbf{Phase $1$ Coloring($x$)} \\
Input: a vertex $x \in V$\\
Out\=put\=: a color in \{\emph{red}, \emph{blue}\}\\
1.\> If $x$ is already colored \\
  \>\> Return the color of $x$ \\
2.\> If $x$ is in \emph{trouble-$1$} state\\
  \>\> Return \textbf{Phase $2$ Coloring}($x$)\\
3.\> If $x$ is in \emph{trouble-$2$} state\\
  \>\> Return \textbf{Phase $3$ Coloring}($x$)\\
4.\> If $x$ is  in \emph{uncolored} state \\
  \>\> (a) Uniformly at random choose a color $c$ for $x$ from \{\emph{red, blue}\} \\
  \>\> (b) Update the states of all hyperedges in $\mathcal{E}(x)$ \\
  \>\> (c) Return color $c$
\end{tabbing}
\end{minipage}
}
\end{center}
\caption{Phase $1$ coloring algorithm}
\label{Fig:Phase1}
\end{figure*}

If $x$ is already colored (that is,
$x$ is in either \emph{red} or \emph{blue} state), then we simply return that color.
If $x$ is in the \emph{trouble-$1$} state, we invoke Phase $2$ coloring for vertex $x$.
If $x$ is in the \emph{trouble-$2$} state, we invoke Phase $3$ coloring for vertex $x$.
If $x$ is \emph{uncolored}, then we flip a fair coin to color $x$
red or blue with equal probability 
(that is, vertex $x$'s state becomes \emph{red} or \emph{blue}, respectively). 
After that, we update the status of all the
hyperedges in $\mathcal{E}(x)$. Specifically, if some $E_{i} \in \mathcal{E}(x)$
has $k_{1}$ vertices in one color and no vertices in the other color, then
we change $E_{i}$ from \emph{initial} into \emph{dangerous-$1$} state.
Furthermore, all uncolored vertices in $E_{i}$ 
will be changed to \emph{trouble-$1$} states.
On the other hand, if both colors appear among the
vertices of $E_{i}$, we update the
state of $E_{i}$ from \emph{initial} to \emph{safe}.
If none of the vertices in a hyperedge is \emph{uncolored} and the hyperedge is still
in \emph{initial} state (that is, it is neither \emph{safe} or \emph{dangerous-$1$}),
then we change its state to \emph{unsafe-$1$}.
Note that if a hyperedge is \emph{unsafe-$1$} then all of its
vertices are either colored or in \emph{trouble-$1$} state, 
and the colored vertices are monochromatic.

\paragraph{Running time analysis.} 
The running time of Phase $1$ coloring for an \emph{uncolored} vertex 
$x$ is $O(kd)=O(1)$ (recall that we assume both $k$ and $d$ are constants). 
This is because vertex $x$ can belong to at most $d+1$ hyperedges, hence there are at most
$k(d+1)$ vertices that need to be updated during Phase $1$. 
If $x$ is already a \emph{colored} vertex, 
the running time is clearly $O(1)$. 
Finally, the running time of Phase $1$ coloring 
for a \emph{trouble-$1$} or \emph{trouble-$2$}
vertex is $O(1)$ plus the running time of Phase $2$ coloring 
or $O(1)$ plus the running time of Phase $3$ coloring, respectively.

\subsection{Phase 2 coloring}\label{Sec:hypergraph_P2}
\begin{figure*}
\begin{center}
\fbox{
\begin{minipage}{8in}
\small
\begin{tabbing}
\textbf{Phase $2$ Coloring($x$)} \\
Input: a \emph{trouble-$1$} vertex $x \in V$ \\
Out\=put\=: a \=color in \{\emph{red}, \emph{blue}\} or \emph{FAIL}\\
1.\> Start from $\mathcal{E}(x)$ to explore $G$ in order to find the connected \\
  \>\>components of all the \emph{surviving-$1$} hyperedges around $x$\\
2.\> If the size of the component is larger than $c_{1}\log N$ \\
  \>\> Abort and return \emph{FAIL}\\
3.\> Repeat the following $O(\frac{\log{N}}{\log\log{N}})$ times and stop if
  a \emph{good} coloring is found\\
   \>\> (a) Color all the vertices in $C_{1}(x)$ uniformly at random\\
   \>\> (b) Explore the dependency graph of $G|_{S_{1}(x)}$ \\
   \>\> (c) Check if the coloring is \emph{good} \\
4.\> Return the color of $x$ in the good coloring
\end{tabbing}
\end{minipage}
}
\end{center}
\caption{Phase $2$ coloring algorithm}
\label{Fig:Phase2}
\end{figure*}

During the second phase of coloring, 
given an input vertex $x$ (which is necessarily a \emph{trouble-$1$}),
we first explore the dependency graph $G$ of the hypergraph $H$ 
by keep coloring some other vertices whose colors
may have some correlation with the coloring of $x$.
In doing so, we grow a connected component of \emph{surviving-$1$} hyperedges containing $x$ in $G$. 
Here, a hyperedge is called \emph{surviving-$1$} 
if it is either \emph{dangerous-$1$} or \emph{unsafe-$1$}.
We denote this connected component of \emph{surviving-$1$} hyperedges 
surrounding vertex $x$ by $C_{1}(x)$.

\paragraph{Growing the connected component.}
Specifically, in order to find out $C_{1}(x)$, we maintain a set of hyperedges $\mathcal{E}_{1}$ 
and a set of vertices $V_{1}$. Throughout the process of exploring $G$, 
$V_{1}$ is the set of \emph{uncolored} vertices that are contained in some hyperedge
in $\mathcal{E}_{1}$. Initially $\mathcal{E}_{1}=\mathcal{E}(x)$.
Then we independently color each vertex in $V_{1}$ \emph{red} or \emph{blue} uniformly at random.
After coloring each vertex, we update the state of every hyperedge that contains the vertex. 
That is,
if any hyperedge $E_{i} \in V_{1}$ becomes \emph{safe}, then we remove
$E_{i}$ from $V_{1}$ and delete all the vertices that are \emph{only} contained in $E_{i}$.
On the other hand, once a hyperedge in $V_{1}$ becomes \emph{dangerous-$2$} 
(it has $k_2$ vertices,
all the \emph{uncolored} vertices in that hyperedge become \emph{trouble-$2$} and
we skip the coloring of all such vertices.
After the coloring of all vertices in $V_{1}$, hyperedges in $\mathcal{E}_{1}$ are surviving hyperedges.
Then we check all the hyperedges in $G$ that are adjacent to the hyperedges in $\mathcal{E}_{1}$.
If any of these hyperedges is not in the \emph{safe} state, then we add it to $\mathcal{E}_{1}$ and also
add all its \emph{uncolored} vertices to $V_{1}$.
Now we repeat the coloring process described above for these newly added \emph{uncolored} vertices.
This exploration of the dependency graph terminates if, either there is no more hyperedge to color,
 or the number of \emph{surviving-$1$} hyperedges in $\mathcal{E}_{1}$ is greater than $c_{1}\log N$,
 where $c_{1}$ is some absolute constant.
The following Lemma shows that, almost surely, the size of $C_{1}(x)$ is at most $c_{1}\log N$.

\begin{lemma}[\cite{Alo91}]\label{lemma:P2-Alon}
Let $S\subseteq G$ be the set of surviving hyperedges after the first phase. Then with probability
at least $1-\frac{1}{2N}$ 
(over the choices of random coloring), all connected components $C_{1}(x)$ of $G|_{S}$
have sizes at most $c_{1}\log N$.
\end{lemma}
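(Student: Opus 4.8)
The plan is to reproduce Beck's branching-process argument, now applied to the dependency graph $G$ of the hyperedges of $H$ with "first phase" replaced by Phase~$1$ coloring. First I would fix notation: a hyperedge $E$ \emph{survives} the first phase if it is in state \emph{dangerous-$1$} or \emph{unsafe-$1$}, i.e., if it ended up monochromatic on its colored vertices with at least $k_1$ of them colored; call the surviving set $S$. The key probabilistic estimate is that $\Pr[E \in S] \le 2^{-(k_1-1)}$ for every fixed hyperedge $E$: conditioning on any coloring order, the probability that the first $k_1$ vertices of $E$ to be colored are all the same color is $2\cdot 2^{-k_1}$, and if fewer than $k_1$ vertices of $E$ are colored then $E$ cannot be \emph{unsafe-$1$} or \emph{dangerous-$1$}. (One must be slightly careful that the events "troubled" vertices are skipped only \emph{decreases} the chance of surviving, so the bound still holds; this is the analogue of the footnote-style remark in the MIS and radio-network sections.) Crucially, the event $E\in S$ depends only on the random colors of the vertices of $E$, hence on hyperedges at distance $0$; so two survival events are independent once the corresponding hyperedges are at distance $\ge 2$ in $G$, and the "dependency graph for $\{E\in S\}$" is just $G$ itself (or $G^{\le 1}$), which has maximum degree $\le d$.

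Next I would run the $3$-tree counting exactly as in the proof of Lemma~\ref{lemma:survived_size}. Consider $G^3$, the distance-$3$ graph of $G$; its maximum degree is at most $d(d-1)^2 < d^3$. The number of $3$-trees of size $w$ in $G^3$ is at most $N\cdot 4^w\cdot (d^3)^{w-1} \le N(4d^3)^w$, using the bound $4^w$ on non-isomorphic trees on $w$ vertices and the greedy vertex-by-vertex embedding. The vertices of any such $3$-tree are pairwise at distance $\ge 3$ in $G$, so their survival events are mutually independent, and the probability they all survive is at most $\bigl(2^{-(k_1-1)}\bigr)^w$. With the hypothesis $16d(d-1)^3(d+1) < 2^{k_1}$ (which in particular gives $8d^3 < 2^{k_1-1}$, with room to spare), the expected number of surviving $3$-trees of size $w$ is at most $N(4d^3)^w 2^{-(k_1-1)w} \le N\,2^{-w}$, which is at most $1/(2N)$ once $w = c_1\log N$ for a suitable constant $c_1$. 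By Markov's inequality, with probability at least $1-\frac{1}{2N}$ there is no surviving $3$-tree of size exceeding $c_1\log N$. Finally I would invoke the "$3$-tree lemma" (the variant of Beck's $4$-tree lemma used already in Section~\ref{Sec:maxis}): any connected component of $S$ in $G$ of size $s$ contains a $3$-tree of size at least $s/d^3$, so no surviving component can have size more than $c_1 d^3\log N = O(\log N)$; absorbing the $d^3$ factor into the constant $c_1$ gives the statement.

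The main obstacle, and the only place requiring care, is pinning down the correct survival probability bound and the exact notion of independence for Phase~$1$ survival. Unlike the MIS case, where each round is symmetric, here the vertices are colored in an adversarial (query-driven) order and "troubled" vertices are skipped, so one has to argue that the worst case over orderings still yields $\Pr[E\in S]\le 2^{-(k_1-1)}$, and that skipping troubled vertices cannot increase this probability. Once that estimate is in hand together with the observation that $\{E\in S\}$ is determined by the colors of $E$'s own vertices — so that $G$ (or $G^{\le 1}$, which changes the degree bound only by a constant factor absorbed in the hypotheses) is a valid dependency graph — the rest is a verbatim transcription of Beck's counting argument as already carried out in Lemma~\ref{lemma:survived_size}, and indeed the paper simply cites Alon~\cite{Alo91} for it.
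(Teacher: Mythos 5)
The paper does not prove this lemma; it simply cites Alon~\cite{Alo91}. Your reconstruction follows the right blueprint — Beck's $3$-tree counting exactly as the paper carries out for Lemma~\ref{lemma:survived_size} — but the two probabilistic claims at its core are not correct as stated, and you flag them yourself as ``the main obstacle'' without resolving them. First, a hyperedge in the \emph{unsafe-$1$} state need not have $k_1$ colored vertices: the definition only requires that every vertex be colored or \emph{trouble-$1$} and that the colored ones be monochromatic, so an edge with very few (even zero) colored vertices and many troubled ones can survive Phase~$1$, and the bound $\Pr[E\in S]\le 2^{1-k_1}$ does not follow from ``the first $k_1$ colored vertices of $E$ are monochromatic.'' Second, and more seriously, the event $E\in S$ does \emph{not} depend only on the colors of $E$'s own vertices: whether a vertex $v\in E$ becomes \emph{trouble-$1$} is triggered by an intersecting hyperedge becoming \emph{dangerous-$1$}, which depends on colors of vertices outside $E$; and even the event ``$E$ becomes \emph{dangerous-$1$}'' depends on which of $E$'s vertices are skipped due to being troubled, hence again on neighboring edges. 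So the dependency graph for $\{E\in S\}$ is not $G$ (or $G^{\le 1}$) but at least $G^{\le 2}$, and the mutual independence across a $3$-tree of $G^{3}$ is not immediate.

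These are precisely the subtleties the paper handles explicitly in the MIS section by introducing the auxiliary process $\mathbf{MIS}_B$ and proving the coupling $A_v\subseteq B_v$; an analogous move is needed here and is missing from your sketch. One standard way to do it (following Alon and Beck): bound only the \emph{dangerous-$1$} events by dominating them with a monotone event that depends solely on the iid colors of $E$'s own vertices, and observe that every \emph{unsafe-$1$} edge must contain a \emph{trouble-$1$} vertex and is therefore adjacent in $G$ to a \emph{dangerous-$1$} edge — so a surviving component of size $s$ forces $\Omega(s/d)$ dangerous edges, from which one extracts a large $3$-tree of \emph{dangerous} edges whose events are genuinely independent. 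Without such a coupling or structural reduction, both the probability estimate and the independence you invoke are unjustified, even though the $3$-tree counting that follows them is carried out correctly.
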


\paragraph{Random coloring.}
Since $C_{1}(x)$ is not connected to any \emph{surviving-$1$} hyperedges in $H$, 
we can color the vertices in the connected component $C_{1}(x)$ 
without considering any other hyperedges that are outside $C_{1}(x)$.
Now we follow a similar coloring process as in Phase $1$ to color the vertices in $C_{1}(x)$
uniformly at random and in an arbitrary ordering.
The only difference is, 
we ignore all the vertices that are already colored \emph{red} or \emph{blue},
and if $k_{1}+k_{2}$ vertices in a hyperedge get colored monochromatically,
and all the rest of vertices in the hyperedge are in \emph{trouble-$1$} state,
then this hyperedge will be in \emph{dangerous-$2$} state and all the uncolored vertices in it 
will be in \emph{trouble-$2$} state. 
Analogously we define \emph{unsafe-$2$} hyperedges as 
hyperedges whose vertices are either colored or in \emph{trouble-$2$} state and 
all the colored vertices are monochromatic.
Finally, we say a hyperedge is a \emph{surviving-$2$} edge if it is in
either \emph{dangerous-$2$} state or \emph{unsafe-$2$} state.

Let $S_{1}(x)$ be the set of surviving hyperedges in $C_{1}(x)$
after all vertices in $C_{1}(x)$ are either colored or in \emph{trouble-$2$} state.
Now we explore the dependency graph of $S_{1}(x)$ to find out all
the connected components. Another application of Lemma~\ref{lemma:P2-Alon}
to $G|_{S_{1}(x)}$ shows that with probability 
at least $1-O(\frac{1}{\log^{2}N})$ (over the choices of random coloring), 
all connected components in $G|_{S_{1}(x)}$
have sizes at most $c_{2}\log\log N$, where $c_{2}$ is some constant.
We say a Phase $2$ coloring is \emph{good} if this condition is satisfied.
Now if a random coloring is not good, then 
we erase all the coloring performed during Phase $2$ and repeat the above coloring
and exploring dependency graph process.
We keep doing this until we find a
good coloring. Therefore, 
after recoloring at most $O(\frac{\log{N}}{\log\log{N}})$ times (and therefore with
at most $\polylog{N}$ running time),
we can, with probability at least $1-1/2N^2$, color $C_{1}(x)$ such that 
each connected component in $G|_{S_{1}(x)}$ has size at most
$c_{2}\log\log N$.
By the union bound, with probability at least $1-1/2N$,
the Phase $2$ colorings for all connected components find some good colorings.

\paragraph{Running time analysis.}
Combining the analysis above with an argument similar to the running time analysis
of Phase $1$ coloring gives
\begin{claim}\label{claim:P2-time}
Phase $2$ coloring takes at most $\polylog{N}$ time.
\end{claim}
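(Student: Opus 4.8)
The plan is to bound the running time of each of the four steps of \textbf{Phase $2$ Coloring} in Figure~\ref{Fig:Phase2} separately, using the fact that every connected component the algorithm touches is capped in size --- either because we explicitly abort (Step 2), or because the exploration is confined to a component we have already verified to be small. Note that the bound we are after is \emph{unconditional}: the probabilistic guarantees of Lemma~\ref{lemma:P2-Alon} are only needed to argue that the abort in Step 2 and the failure to find a \emph{good} coloring in Step 3 are rare events; they play no role in the time bound, since both the component size and the number of loop iterations are capped explicitly in the code.

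First, for Step 1, we grow $C_1(x)$ by exploring $G$ starting from $\mathcal{E}(x)$. Since the exploration halts as soon as the number of \emph{surviving-$1$} hyperedges in $\mathcal{E}_1$ exceeds $c_1\log N$, at most $O(\log N)$ hyperedges are ever added to $\mathcal{E}_1$, and hence at most $O(kd\log N)=O(\log N)$ vertices are ever added to $V_1$ (here $k,d$ are constants). Coloring a vertex and updating the states of the at most $d+1$ hyperedges containing it costs $O(kd)=O(1)$, exactly as in the Phase~$1$ running-time analysis, and listing the at most $d$ neighbors in $G$ of a given hyperedge costs $O(\log N)$ under the linked-list representation discussed in Section~\ref{Sec:hypergraph}. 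Doing this for $O(\log N)$ hyperedges gives $\polylog{N}$ time for Step 1. Step 2 is an $O(\log N)$-time size comparison. For Step 3, consider one iteration of the loop: part (a), re-coloring the $O(\log N)$ vertices of $C_1(x)$ and updating hyperedge states, costs $O(\log N)$ (and the earlier erasure of a previous coloring is likewise $O(\log N)$); part (b), exploring the dependency graph of $G|_{S_1(x)}$ --- which is contained in $C_1(x)$ and so has $O(\log N)$ hyperedges --- costs $O(\log N)$ neighbor lookups, i.e.\ $O(\log^2 N)$; part (c), checking goodness, is a traversal verifying that each connected component of $G|_{S_1(x)}$ has size at most $c_2\log\log N$, costing $O(\log N)$. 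So one iteration costs $\polylog{N}$, and since the loop runs at most $O(\log N/\log\log N)$ times, Step 3 costs $\polylog{N}$. Step 4 is $O(1)$. Summing, \textbf{Phase $2$ Coloring} runs in $\polylog{N}$ time.

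The only point requiring care --- and the main (minor) obstacle --- is to confirm that the dependency-graph explorations in Steps 1 and 3 never touch more than $O(\log N)$ hyperedges before the relevant cap takes effect, so that the $O(\log N)$ per-hyperedge neighbor-lookup cost does not compound beyond $\polylog{N}$. For Step 1 this is immediate from the abort condition; for Step 3 it follows because $S_1(x)\subseteq C_1(x)$, whose size was already bounded by $c_1\log N$ in Step 1. Once this containment is made explicit, the rest of the accounting is the routine multiplication of ``number of hyperedges touched'' by ``$O(\log N)$ cost per touch,'' together with ``number of loop iterations'' times ``$\polylog{N}$ cost per iteration.''
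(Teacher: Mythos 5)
Your proof is correct and follows essentially the same approach the paper intends: the paper's own proof is a one-line pointer to the preceding discussion (the explicit abort at $c_1\log N$ surviving hyperedges, the $O(\log N/\log\log N)$ cap on recoloring iterations, and the Phase~$1$-style per-vertex cost accounting), and you have simply spelled out that accounting step by step. Your explicit observation that the time bound is unconditional --- that Lemma~\ref{lemma:P2-Alon} is needed only for correctness/success probability, not for the running-time cap --- is a useful clarification that the paper leaves implicit but does not change the argument.
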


\subsection{Phase 3 coloring}\label{Sec:hypergraph_P3}

\begin{figure*}
\begin{center}
\fbox{
\begin{minipage}{8in}
\small
\begin{tabbing}
\textbf{Phase $3$ Coloring($x$)} \\
Input: a \emph{trouble-$2$} vertex $x \in V$ \\
Out\=put\=: a \=color in \{\emph{red}, \emph{blue}\}\\
1.\> Start from $\mathcal{E}(x)$ to explore $G$ in order to find the connected \\
  \>\> component of all the \emph{surviving-$2$} hyperedges around $x$\\
2.\> Go over all possible colorings of the connected component\\
  \>\> and color it using a feasible coloring. \\
3.\> Return the color $c$ of $x$ in this coloring.
\end{tabbing}
\end{minipage}
}
\end{center}
\caption{Phase $3$ coloring algorithm}
\label{Fig:Phase3}
\end{figure*}

In Phase $3$, given a vertex $x$ (which is necessarily \emph{trouble-$2$}), 
we grow a connected component which includes $x$ as in Phase $2$, 
but of \emph{surviving-$2$} hyperedges. 
Denote this connected component of \emph{surviving-$2$} hyperedges by $C_{2}(x)$.
By our Phase $2$ coloring, the size of $C_{2}(x)$ is
no greater than $c_{2} \log\log N$. 
We then color the vertices in this connected component by exhaustive search.
The existence of such a coloring is guaranteed by the 
 Lov{\'{a}}sz Local Lemma (Lemma ~\ref{lemma:LLL}).
 
\begin{claim}\label{claim:P3-time}
The time complexity of Phase $3$ coloring is at most $\polylog{N}$. 
\end{claim}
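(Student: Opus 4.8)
The plan is to bound the time complexity of Phase $3$ coloring by separately bounding (i) the cost of growing the connected component $C_2(x)$ of \emph{surviving-$2$} hyperedges, and (ii) the cost of the exhaustive search for a feasible coloring on that component, and then observing that both are $\polylog{N}$. First I would recall the structural guarantee established during Phase $2$: with probability at least $1-1/2N$ (over all queries), every connected component in $G|_{S_1(x)}$ has size at most $c_2\log\log N$. Since $C_2(x)$ is exactly such a component, it contains at most $c_2\log\log N$ hyperedges; as each hyperedge has $k$ vertices and meets at most $d$ others, the component spans at most $k c_2\log\log N = O(\log\log N)$ vertices, and step $1$ — exploring $G$ outward from $\mathcal{E}(x)$ to discover $C_2(x)$ — touches $O(d\log\log N)$ hyperedges, each of whose at most $d$ neighbors is found in $O(\log N)$ time using the linked-list incidence representation. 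Hence step $1$ runs in $O(d^2\log N\log\log N) = \polylog{N}$ time.

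For step $2$, the exhaustive search enumerates all colorings of the at most $O(\log\log N)$ vertices of $C_2(x)$; there are at most $2^{O(\log\log N)} = \polylog{N}$ such colorings, and for each one we check feasibility (no hyperedge monochromatic among its colored and \emph{trouble-$2$} vertices) in time polynomial in the size of the component, i.e. $\polylog{N}$. The existence of a feasible coloring — so that the search does not run off the end of the list — is exactly what the Lov{\'{a}}sz Local Lemma (Lemma~\ref{lemma:LLL}) guarantees, using the parameter condition $2e(d+1) < 2^{k_3}$: after Phases $1$ and $2$ have fixed the colors outside the component, each surviving hyperedge still has at least $k_3$ uncolored (\emph{trouble-$2$}) vertices, so the bad event ``this hyperedge is monochromatic'' has probability at most $2\cdot 2^{-k_3}$ and depends on at most $d$ others, and $e\cdot 2\cdot 2^{-k_3}(d+1)\le 1$. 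Thus the total running time of Phase $3$ is $\polylog{N} + \polylog{N} = \polylog{N}$.

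The only delicate point — and the one I would treat as the main obstacle — is making precise that the component $C_2(x)$ the Phase $3$ algorithm grows is genuinely one of the components to which the size bound from Lemma~\ref{lemma:P2-Alon} (via its application to $G|_{S_1(x)}$ in the Phase $2$ analysis) applies, and that this holds simultaneously for all queried vertices with the stated failure probability. This is a matter of bookkeeping: the Phase $2$ algorithm only returns control to Phase $3$ once it has found a \emph{good} coloring, which by definition means all components of $G|_{S_1(x)}$ have size at most $c_2\log\log N$; the persistence of hyperedge and vertex states across queries ensures the relevant $S_1(x)$ is already fixed by the time Phase $3$ is invoked on any \emph{trouble-$2$} vertex. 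Given that, the time bound follows from the elementary counting above, and the correctness of the brute-force step follows from the LLL exactly as in Alon's analysis~\cite{Alo91}.
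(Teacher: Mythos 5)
Your proof is correct and follows essentially the same route as the paper's: the component $C_2(x)$ has at most $c_2\log\log N$ hyperedges by the Phase~2 analysis, so exhaustive search over its $O(\log\log N)$ vertices costs $2^{O(\log\log N)}=\polylog{N}$, with existence of a valid coloring guaranteed by the LLL. You are in fact slightly more careful than the paper in charging $O(\log N)$ per dependency-graph lookup during the exploration step (the paper states $O(\log\log N)$ for the whole exploration), but this only affects the polylog exponent, not the claim.
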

\begin{proof}
Using the same analysis as for Phase $2$, 
in time $O(\log\log N)$ we can explore the dependency graph
to grow our connected component of \emph{surviving-$2$} hyperedges. 
Exhaustive search of a valid two-coloring of all the vertices in $C_{2}(x)$ takes time at most 
$2^{O(|C_{2}(x)|)}=2^{O(\log\log N)}=\polylog{N}$, as $|C_{2}(x)| \leq c_2 \log \log N$ and
each hyperedge contains $k$ vertices.
\end{proof}

Finally, we remark that using the same techniques as those in~\cite{Alo91}, 
we can make our local computation algorithm run in parallel and 
find an $\ell$-coloring of a hypergraph for any $\ell\geq 2$
(an $\ell$-coloring of a hypergraph is to color each vertex 
in one of the $\ell$ colors such that each color appears in every hyperedge).

\section{\texorpdfstring{$k$-CNF}
{k-CNF}}\label{Sec:kcnf}

As another example, we show our hypergraph coloring algorithm can be easily modified to 
compute a satisfying assignment of a $k$-CNF formula, 
provided that the latter satisfies some specific properties.

Let $H$ be a $k$-CNF formula on $m$ Boolean variables $x_{1}, \ldots, x_{m}$.
Suppose $H$ has $N$ clauses $H=A_{1} \wedge \cdots \wedge A_{N}$
and each clause consists of exactly $k$ distinct literals.\footnote{
Our algorithm works for the case that each clause has at least $k$ literals; 
for simplicity, we assume that all clauses have uniform size.}
We say two clauses $A_{i}$ and $A_{j}$ \emph{intersect} 
with each other if they
share some variable (or the negation of that variable).
As in the case for hypergraph coloring, $k$ and $d$ are fixed constants
and all asymptotics are with respect to the number of clauses $N$ (and hence $m$, since $m\leq kN$).
Our main result is the following.

\begin{theorem}
Let $H$ be a $k$-CNF formula with $k\geq 2$. If each clause intersects no more
than $d$ other clauses and furthermore $k$ and $d$ 
are such
that there exist three positive integers $k_{1}, k_{2}$ and $k_{3}$ satisfying
the followings relations:
\begin{align*}
k_{1}+k_{2}+{k_3} &=k, \\
8d(d-1)^{3}(d+1) &< 2^{k_{1}},\\
8d(d-1)^{3}(d+1) &< 2^{k_{2}},\\
e(d+1) &< 2^{k_{3}},
\end{align*}
then there exists a local computation algorithm that, given any sequence of
queries to the truth assignments of variables $(x_1, x_2, \ldots, x_s)$, 
with probability at least $1-1/N$,
returns a consistent truth assignment for all $x_i$'s which agrees with some 
satisfying assignment of the $k$-CNF formula $H$. 
Moreover, the algorithm answers each single query in
$O((\log N)^c)$ time, where $c$ is some constant (depending only on $k$ and $d$). 
\end{theorem}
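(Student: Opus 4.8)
The plan is to reduce the $k$-CNF satisfiability problem to the hypergraph two-coloring problem already handled in Theorem~\ref{thm:hypergraph_main}, and then re-run the same three-phase local computation algorithm essentially verbatim. First I would set up the correspondence: each clause $A_i$ of the $k$-CNF formula $H$ becomes a ``hyperedge'' on the $k$ variables it involves, and a truth assignment to the variables plays the role of a two-coloring of the vertices (say $\true \leftrightarrow \red$ and $\false \leftrightarrow \blue$). The one genuine asymmetry to track is that a clause is \emph{violated} by exactly one of the $2^k$ assignments to its literals (the one making every literal false), whereas a hyperedge is violated by exactly two colorings (the two monochromatic ones). This is why the parameters in the $k$-CNF theorem have $8d(d-1)^3(d+1)$ and $e(d+1)$ in place of $16d(d-1)^3(d+1)$ and $2e(d+1)$: we save a factor of $2$ at each phase because the ``bad event'' for a clause has probability $2^{-k}$ rather than $2^{1-k}$ for a hyperedge.

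Next I would re-define the three coloring phases in this language. In Phase~1, we assign each queried variable a uniformly random truth value in the query order; a clause becomes \emph{dangerous-$1$} once $k_1$ of its literals have been fixed to false (equivalently, $k_1$ of its variables assigned so as to falsify their literal) and none yet fixed to true, and the remaining variables of such a clause are marked \emph{trouble-$1$}; a clause becomes \emph{safe} as soon as one of its literals is satisfied. Phases~2 and~3 are analogous, with thresholds $k_1+k_2$ and with the Phase~3 exhaustive search over the (small) connected component of surviving-$2$ clauses. The existence of a satisfying assignment on each such small component is guaranteed by the symmetric Lov\'asz Local Lemma (Lemma~\ref{lemma:LLL}): each clause fails with probability $2^{-k}$ and depends on at most $d$ others, and the relevant LLL condition $e(d+1)\le 2^{k_3}$ (in fact $e(d+1) < 2^{k_3} \le 2^k$) holds by hypothesis, so even conditioned on the Phase~1 and Phase~2 partial assignments the restricted formula on a component is satisfiable.

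Then I would verify that Alon's analysis of the component sizes goes through with the improved constants. The key probabilistic facts are exactly as in Lemma~\ref{lemma:P2-Alon}: after Phase~1, the probability that a given clause survives (is \emph{dangerous-$1$} or \emph{unsafe-$1$}) is at most $2^{-k_1}$ times a small polynomial factor in $d$, and since the dependency graph of clauses has degree at most $d$, a standard union bound over $3$-trees (as in the MIS proof, Lemma~\ref{lemma:survived_size}) in the distance-$3$ graph of the dependency graph shows that with probability $\ge 1-1/2N$ every connected component of surviving-$1$ clauses has size $O(\log N)$; applying the same argument one level down to each such component (with the random recoloring repeated $O(\log N/\log\log N)$ times) yields surviving-$2$ components of size $O(\log\log N)$ with probability $\ge 1-1/2N$. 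The arithmetic is where the constants $8d(d-1)^3(d+1) < 2^{k_1}$ and $2^{k_2}$ are used — they are precisely what is needed to make the expected number of bad $3$-trees of size $w$ be at most $N\cdot 2^{-\Omega(w)}$, so that $w = \Theta(\log N)$ (resp.\ $\Theta(\log\log N)$) kills it.

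Finally I would assemble the running time and correctness bounds exactly as in Section~\ref{Sec:hypergraph}: Phase~1 on a query costs $O(kd)=O(1)$ per newly-assigned variable; Phase~2 costs $\polylog N$ (exploring an $O(\log N)$-size component and recoloring $O(\log N/\log\log N)$ times, each recoloring $O(\log N)$ work); Phase~3 costs $2^{O(\log\log N)}=\polylog N$ for the brute-force search over the $O(\log\log N)$-size component. Storing the generated random bits and the bookkeeping for components uses $O(N)$ space, and the total failure probability is at most $1/N$ by the union bound over the two events above. Consistency is immediate since all answers are derived from one global (lazily-revealed) random assignment together with deterministic component-local brute force. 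I expect the main obstacle — really the only non-bookkeeping point — to be confirming that the violation-probability saving of a factor $2$ propagates correctly through all three phases and that the resulting parameter inequalities $8d(d-1)^3(d+1) < 2^{k_1}, 2^{k_2}$ and $e(d+1) < 2^{k_3}$ are exactly the right ones to drive both the $3$-tree counting bounds and the Phase~3 LLL; once that is checked, the rest is a transcription of the hypergraph argument.
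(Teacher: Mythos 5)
Your proposal is correct and takes essentially the same approach as the paper: both reuse the three-phase hypergraph two-coloring algorithm with clauses playing the role of hyperedges, and both pinpoint the same key observation that a clause becomes \emph{dangerous-$1$} with probability $2^{-k_1}$ (rather than $2^{1-k_1}$ for a hyperedge), which is precisely where the factor-of-$2$ improvement in the parameter constraints comes from. The paper's proof is only a sketch and your write-up is, if anything, more explicit about propagating that saving through the $3$-tree counting and the Phase~3 LLL application.
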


\begin{sketch}
We follow a similar
algorithm to that of 
hypergraph two-coloring as presented in Section~\ref{Sec:hypergraph}.
Every clause will be in
either \emph{initial}, \emph{safe}, \emph{unsafe-$1$}, \emph{unsafe-$2$},
\emph{dangerous-$1$} or \emph{dangerous-$2$} state.
Every variable will be in either \emph{unassigned}, \emph{true-$1$}, \emph{false-$1$}, 
\emph{trouble-$1$} or \emph{trouble-$2$} state.
Initially every clause is in \emph{initial} state and every variable is in \emph{unassigned} state.
Suppose we are asked about the value of a variable $x_{i}$. 
If $x_{i}$ is in \emph{initial} state,
we randomly choose from $\{\true, \false\}$ 
with equal probabilities and assign it to $x_{i}$. 
Then we update all the clauses that contain either $x_{i}$ or $\bar{x}_{i}$ accordingly: 
If the clause is already evaluated to
$\true$ by this assignment of $x_{i}$, then we mark the literal as \emph{safe};
if the clause is in \emph{initial} state and is not \emph{safe} yet and 
$x_{i}$ is the $k^{\text{th}}_{1}$ literal in the clause that has been assigned values,
then the clause is marked as \emph{dangerous-$1$} and all the remaining unassigned variables in that clause
are now in \emph{trouble-$1$} state. 
We perform similar operations for clauses in other states as we do for the
hypergraph coloring algorithm.
The only difference is now we have $\Pr[\text{$A_{i}$ becomes \emph{dangerous-$1$}}]=2^{-k_{1}}$,
instead of $2^{1-k_{1}}$ as in the hypergraph coloring case.
Following the same analysis, almost surely, all connected components in the dependency graph 
of \emph{unsafe-$1$} clauses are of size at most $O(\log N)$ and
almost surely all connected components in the dependency graph 
of \emph{unsafe-$2$} clauses are of size at most $O(\log\log N)$,
which enables us to do exhaustive search to find a satisfying assignment.
\end{sketch}

\section{Concluding Remarks and Open Problems}\label{Sec:conclusion}
In this paper we propose a model of local computation algorithms
and give some techniques which can be applied to construct local computation algorithms
with polylogarithmic time and space complexities.
It would be interesting to understand the scope of problems which
can be solved with such algorithms and to develop other techniques
that would apply in this setting.

\section*{Acknowledgments}
We would like to thank Ran Canetti,
Tali Kaufman, Krzysztof Onak 
and the anonymous referees for useful discussions and suggestions.
We thank Johannes Schneider for his help with the references.

\bibliographystyle{plain}

\end{document}